\documentclass[hidelinks,onefignum,onetabnum]{siamart251104}

\usepackage{lipsum}
\usepackage{amsfonts}
\usepackage{graphicx}
\usepackage{subcaption}
\usepackage{float}
\usepackage{mathtools}
\usepackage{epstopdf}
\usepackage{algorithmic}
\ifpdf
  \DeclareGraphicsExtensions{.eps,.pdf,.png,.jpg}
\else
  \DeclareGraphicsExtensions{.eps}
\fi

\newsiamremark{remark}{Remark}
\newtheorem{example}[theorem]{Example}
\newsiamremark{hypothesis}{Hypothesis}
\crefname{hypothesis}{Hypothesis}{Hypotheses}
\newsiamthm{claim}{Claim}
\newsiamremark{fact}{Fact}
\crefname{fact}{Fact}{Facts}
\newtheorem{question}{Question}
\newcommand{\defword}[1]{\emph{#1}}
\allowdisplaybreaks[4]

\headers{Minimal Degenerate Zero-One Reaction Networks}{Y. Chen and X. Tang}

\title{Characterization of Minimal Degenerate Zero-One Reaction Networks\thanks{Submitted to the editors DATE.
}}

\author{
Yuanlin Chen
\and
Xiaoxian Tang\thanks{School of Mathematical Sciences, Beihang University, Beijing, China (\email{xiaoxian@buaa.edu.cn}).}
}

\ifpdf
\hypersetup{
  pdftitle={Characterization of Minimal Degenerate Zero-One Reaction Networks},
  pdfauthor={Y. Chen and X. Tang}
}
\fi

\begin{document}

\maketitle

\begin{abstract}
A fundamental problem in the algebraic study of biochemical reaction networks is to characterize degeneracy. Two-dimensional zero-one networks, where each reactant appears with stoichiometric coefficient zero or one, constitute the smallest biologically relevant class capable of exhibiting degeneracy. In this paper, we provide a complete classification: a two-dimensional zero-one network without trivial species is degenerate if and only if it is a consistent subnetwork of a species refinement of one of two prototypical networks, namely the complete paired-exchange network or the catalytic-pair conversion network. Equivalently, in matrix-theoretic terms, degeneracy is determined entirely by the row patterns of the stoichiometric and reactant matrices, and can be verified by purely structural inspection without computation. Remarkably, every such degenerate network has a steady-state system consisting entirely of binomials, so its steady-state variety is toric. These results also bear on absolute concentration robustness: since any network exhibiting this property necessarily contains a degenerate subnetwork, the minimal degenerate networks characterized here serve as fundamental building blocks for constructing such networks.
\end{abstract}

\begin{keywords}
stoichiometric matrix, reactant matrix, degeneracy, biochemical reaction network
\end{keywords}

\begin{MSCcodes}
15A21, 05C50, 92C42
\end{MSCcodes}

\section{Introduction}
\label{sec1}

For dynamical systems arising from biochemical reaction networks, nondegeneracy is central to the study of complex dynamical behaviors. 
A network  is degenerate  if every positive steady state is degenerate. In contrast, a network is nondegenerate if it admits at least one nondegenerate positive steady state. Deciding whether a network is degenerate or nondegenerate is essentially the same problem. On the one hand, the existence of nondegenerate positive steady states is often a key prerequisite for phenomena such as multistability and oscillations \cite{BanajiBorosHofbauer2024,DomijanKirkilionis2009,JiaoTangZeng2025}. On the other hand, in cellular regulation, fundamental biological processes, including signal transduction, cellular decision-making, and gene-expression control, are closely linked to important dynamical properties of the underlying network \cite{ConradiShiu2018,JaniakSpens2005}, such as nondegeneracy \cite{Mueller2016,ShiuDeWolff2019}, robustness \cite{Puente2025,ShinarAlonFeinberg2009}, and bifurcations \cite{BanajiBoros,BanajiBorosHofbauer2025,TangWang2023}. Therefore, the study of nondegeneracy not only helps clarify the mechanisms underlying complex dynamics but also reveals the structural basis of biological function. For this reason, we propose the following question.
\begin{question}
\label{q1}
What structure characterizes degenerate networks and makes it possible to determine nondegeneracy explicitly, without computation?
\end{question}

We now describe the connection between the degeneracy criterion and matrix theory. A reaction network is uniquely determined by its stoichiometric matrix $\mathcal{N}$ and reactant matrix $\mathcal{X}$, and thus can be identified with the pair $(\mathcal{N},\mathcal{X})$. Through this identification, a standard algebraic criterion translates degeneracy into a condition on the matrix $A:=\mathcal{N}\operatorname{diag}(\gamma)\mathcal{X}^\top$, where $\gamma$ is any element of the flux cone $\mathcal{F}(\mathcal{N})$. Specifically, the network is degenerate if and only if every $r\times r$ principal minor of $A$ vanishes identically, with $r:=\operatorname{rank}(\mathcal{N})$. Consequently, classifying all degenerate networks is equivalent to characterizing every matrix pair $(\mathcal{N},\mathcal{X})$ for which this rank-deficiency condition holds. 
We first illustrate the algebraic criterion through the following network, postponing the formal definitions:
\begin{align}
\label{eq:net1}
X_1+X_2 \xrightarrow{\kappa_1} X_2+X_3,~
X_3 \xrightarrow{\kappa_2} X_1+X_2,~
X_1+X_2 \xrightarrow{\kappa_3} X_1.
\end{align}
This network uniquely corresponds to the following pair of matrices,
\[
\mathcal{N}=
\begin{pmatrix}
-1 & 1 & 0\\[2pt]
0 & 1 & -1\\[2pt]
1 & -1 & 0
\end{pmatrix},
\qquad
\mathcal{X}=
\begin{pmatrix}
1 & 0 & 1\\[2pt]
1 & 0 & 1\\[2pt]
0 & 1 & 0
\end{pmatrix}.
\]
The flux cone $\mathcal{F}(\mathcal{N})$ is determined by 
$\mathcal{N}\gamma=\mathbf{0}$ with $\gamma\ge\mathbf{0}$, which gives 
$\gamma_1=\gamma_2=\gamma_3$. Hence 
$\mathcal{F}(\mathcal{N})=\{\lambda(1,1,1)^\top\mid\lambda\ge0\}$ and
\[
A(\lambda):=\mathcal{N}\begin{pmatrix}
\lambda & 0 & 0\\
  0 & \lambda & 0\\
  0 & 0 & \lambda
\end{pmatrix}
\mathcal{X}^\top
=\lambda
\begin{pmatrix}
-1 & -1 & 1\\
-1 & -1 & 1\\
1 & 1 & -1
\end{pmatrix}.
\]Note that $\mathcal{N}$ has rank $2$. 
 One readily checks that 
all $2\times 2$ principal minors of $A(\lambda)$ are identically zero. It follows that the given network is degenerate.

The study of degeneracy is central to understanding exceptional dynamical behaviors in biochemical systems. Many realistic networks contain only degenerate subnetworks, motivating methods to regularize them for extending classical chemical reaction network theory \cite{UhrKaltenbachConradiStelling2009}. Degenerate networks also represent boundary cases in algebraic geometry where the positive zero set has a nonempty but nowhere-dense parameter region \cite[Theorem~3.4]{Feliu2026}. In reaction-diffusion systems, such degeneracy significantly influences convergence to steady states, with entropy-based methods providing robust tools for handling nonlinearities of arbitrary order \cite{DesvillettesPhungTang2026}.

Conversely, nondegeneracy underpins the inheritance of complex dynamics: nondegenerate multistationarity, multistability, and Hopf bifurcations can be lifted from subnetworks to the whole network \cite{Banaji2018,Banaji2023,BanajiPantea2018,JoshiShiu2013}, preserved under species additions when rank is unchanged \cite{BanajiBorosHofbauer2022}, and maintained under network enlargements satisfying a general transversality condition \cite{BanajiBorosHofbauer2025}. These inheritance properties have motivated the complete characterization of small networks admitting multistability or Hopf bifurcations \cite{JoshiShiu2017,McClureShiu2024,TangWang2023,TangXu2021}, as well as algorithmic methods for detecting Hopf bifurcations via stoichiometric analysis combined with computational logic and tropical geometry \cite{ErramiEiswirthGrigorievEtAl2015}. A common and critical first step in these algorithms is to exclude degenerate networks.  

However, despite its importance, an explicit structural criterion for degeneracy remains 
largely unavailable, particularly within the fundamental class of zero-one networks (where each reactant appears at most once per reaction, as is typical in many biochemical systems). It is known that one-dimensional zero-one networks admit either no positive steady state or a unique nondegenerate one (Theorem~2 in \cite{JiaoTangZeng2025}). So, the two-dimensional case is the first nontrivial setting in which degeneracy can occur. More recently, all two-dimensional zero-one networks with up to three species were computationally enumerated and all degenerate cases were identified \cite{TangWangZhang2026}. It was observed that the steady-state system of each such degenerate network consists of two monomials, but this binomial property was only noted empirically and lacks a general mathematical proof. For instance, recall the network \eqref{eq:net1}. Under mass-action kinetics, the concentration dynamics are governed by
\begin{align*}
\dot{x}_1=-\kappa_1x_1x_2+\kappa_2x_3,\
\dot{x}_2=\kappa_2x_3-\kappa_3x_1x_2,\
\dot{x}_3=\kappa_1x_1x_2-\kappa_2x_3.
\end{align*}
Clearly, all steady-state equations above have a binomial structure; equivalently, the steady-state variety is toric. 

The main contribution of this paper is a complete answer to Question~\ref{q1} for two-dimensional zero-one networks. We prove that a two-dimensional zero-one network without trivial species is degenerate if and only if it is a consistent subnetwork of a species refinement of one of two prototypical networks: the complete paired-exchange network or the catalytic-pair conversion network (Theorem~\ref{thm:main}). Equivalently, in matrix-theoretic terms, this means that every row of the stoichiometric and reactant matrices must belong to one of two finite sets of explicit row patterns derived from these prototypes (Theorem~\ref{thm:deg-char}). Consequently, for such networks, degeneracy can be determined by purely structural inspection without any computation. As a further consequence, the steady-state system of every degenerate two-dimensional zero-one network identified by our characterization consists entirely of binomials; hence its steady-state variety admits a toric structure.

Beyond its intrinsic theoretical interest, such an explicit characterization of degeneracy has direct algorithmic consequences. In particular, it is instrumental for the detection of absolute concentration robustness (ACR). Recent results show that for full-dimensional networks (i.e., those without conservation laws), a rank-based criterion \cite[Theorem~B]{Feliu2026} establishes that a necessary condition for a species to exhibit ACR is that removing this species yields a degenerate lower-dimensional network. For networks with conservation laws, if removing a species produces a nondegenerate subnetwork, then that species cannot have ACR \cite[Theorem~3.1]{si2026}. Together, these results reveal that degenerate subnetwork structures are inherently present in ACR networks. Consequently, an explicit criterion for degeneracy provides an efficient tool for determining ACR.

The significance of the binomial structure uncovered in our characterization extends even further. For linearly binomial networks, which are common in applications, multistationarity is determined by inspecting the critical function, and the resulting parameter regions are full-dimensional \cite{DickensteinMillanShiuEtAl2019}. This property is deeply connected to toric networks (complex-balanced mass-action systems), whose toric loci admit a rich geometric structure including product decomposition, contractibility, and affine invariance \cite{CraciunJinSorea2025}. These toric ideas have further been extended to general polynomial dynamical systems via Euclidean embedded graphs and toric differential inclusions, with applications to persistence and the Global Attractor Conjecture \cite{Craciun2019}. More broadly, toric varieties are among the most classical and well-studied objects in algebraic geometry, providing a fundamental bridge between commutative algebra, combinatorics, and convex geometry \cite{CoxLittleSchenck2011}. Their significance extends far beyond reaction network theory, with deep applications in mirror symmetry \cite{ChanLauLeungTseng2017}, optimization, coding theory, and algebraic statistics \cite{CoxLittleSchenck2011}. Our characterization thus situates the dynamics of degenerate chemical networks within this rich and classical mathematical framework.

The remainder of this paper is organized as follows. In Section~\ref{sec2}, we review the basic concepts and notation used throughout the paper. In Section~\ref{sec3}, we present the main result for two-dimensional zero-one networks, including both the network version (Theorem~\ref{thm:main}) and the matrix version (Theorem~\ref{thm:deg-char}), followed by examples illustrating applications to the detection of non-vacuous ACR networks of higher dimension. In Section~\ref{sec4}, we provide the proofs of the main results. Specifically, in Section~\ref{sec4.1}, we characterize the two-species case (Theorem~\ref{thm:2s}), and in Section~\ref{sec4.2}, we treat the general case with an arbitrary number of species. Finally, in Section~\ref{sec5}, we conclude with a discussion and outline the direction for future research.

\section{Background}
\label{sec2}
In this section, we provide background on chemical reaction networks (Section \ref{sec2.1}) and the standard criteria for degeneracy based on linear algebra (Section \ref{sec2.2}). The reader is referred to \cite{ConradiFeliuMincheva2017,JiaoTangZeng2025,TangWangZhang2026} for further details.
\subsection{Reaction network}\label{sec2.1}

A \defword{reaction network} $G$, or simply a \defword{network}, consists of a set of $s$ species $\{X_1, \dots, X_s\}$ and a set of $m$ reactions of the form
\begin{align}
\label{eq:network}
\alpha_{1j}X_1 +
 \dots +
\alpha_{sj}X_s
~ \xrightarrow{\kappa_j} ~
\beta_{1j}X_1 +
 \dots +
\beta_{sj}X_s,
 \quad 
    \text{for } j=1, \ldots, m,
\end{align}
where $\alpha_{ij}$ and $\beta_{ij}$ are nonnegative integers called \defword{stoichiometric coefficients}. We assume that $(\alpha_{1j},\ldots,\alpha_{sj}) \neq (\beta_{1j},\ldots,\beta_{sj})$ for every reaction. In addition, each $\kappa_j \in \mathbb{R}_{>0}$ denotes the \defword{rate constant} of the $j$-th reaction in \eqref{eq:network}. A \defword{zero-one network} is a reaction network that consists solely of reactions whose stoichiometric coefficients are all $0$ or $1$. The \defword{stoichiometric matrix} $\mathcal{N}$ and \defword{reactant matrix} $\mathcal{X}$  of $G$ are the $s\times m$ matrices with $(i,j)$-entries $\beta_{ij}-\alpha_{ij}$ and $\alpha_{ij}$, respectively.  We denote the $i$-th row of $\mathcal{N}$ by $\mathcal{N}_i$, and the $j$-th column of $\mathcal{N}$ by $\mathrm{col}_j(\mathcal{N})$. The \defword{stoichiometric subspace}, denoted by $S$, is the real linear subspace spanned by the column vectors $col_1(\mathcal{N}),\ldots,col_m(\mathcal{N})$ of $\mathcal{N}$. The \defword{dimension of the network} $G$ is defined by the dimension of $S$, i.e., $\operatorname{rank}(\mathcal{N})$. 
A species of a zero-one network is called \defword{trivial} if, in every reaction, it either does not appear or appears on both sides of the reaction \cite[Section 2.1]{BanajiBorosHofbauer2024}.  Notice that the row corresponding to a trivial species in the stoichiometric matrix is identically zero. Consequently, removing trivial species yields a network that is dynamically equivalent to the original one. Therefore, we typically focus only on networks without trivial species.

Note that a network $G$ is uniquely determined by its stoichiometric matrix 
$\mathcal{N}$ and reactant matrix $\mathcal{X}$. For simplicity, we also represent 
a network $G$ by the pair $(\mathcal{N},\mathcal{X})$. A network $G'$ is 
\defword{isomorphic} to another network $G$ if $G'$ can be obtained from $G$ 
by independently relabeling the species or the reactions 
\cite[Definition 2.1]{TangXu2021}. Equivalently, $G$ and $G'$ are isomorphic 
if and only if there exist permutation matrices $P$ and $Q$ such that
\[
\mathcal{N}' = P\mathcal{N}Q \quad\text{and}\quad \mathcal{X}' = P\mathcal{X}Q.
\]
In what follows, for brevity, we say that a network $G$ is $G'$ (or is 
$({\mathcal N}', {\mathcal X}')$) to mean that $G$ is isomorphic to $G'$ 
(or to the network uniquely determined by $({\mathcal N}', {\mathcal X}')$).

For any $r$-dimensional network $G$, a \defword{subnetwork} of $G$ is an $r$-dimensional network obtained by removing some reactions from $G$.
We remark that $G$ itself is also  a subnetwork of $G$.

The concentrations of the species $X_1, \ldots, X_s$ are represented by $x_1, \ldots, x_s$, respectively. Under mass-action kinetics, the time evolution of the species concentrations is governed by the following system of ODEs:
\begin{align}\label{eq:sys}
\dot{x} = f(\kappa,x) := \mathcal{N}v(\kappa,x),
\end{align}
where $\kappa:=(\kappa_1,\ldots,\kappa_m)^\top$, 
$x:=(x_1,\ldots,x_s)^\top$, 
and
$v:=(v_1,\ldots,v_m)^\top$ with
$v_j(\kappa,x):=\kappa_j\prod\limits_{i=1}^s x_i^{\alpha_{ij}}$
for $j=1,\ldots,m$.

Let $d := s - \operatorname{rank}(\mathcal{N})$. A $d \times s$ matrix $W$ is called a 
\defword{conservation-law matrix} if its rows form a basis for the left null space of 
$\mathcal{N}$, i.e., $W\mathcal{N} = \mathbf{0}$. Consequently, along any trajectory of 
\eqref{eq:sys} we have $W\dot{x} = \mathbf{0}$, which implies that every solution $x(t)$ 
with nonnegative initial value $x(0)\in\mathbb{R}_{\ge 0}^s$ remains in the  \defword{stoichiometric compatibility class}
\begin{align}\label{eq:pc}
\mathcal{P}_c := \{x \in \mathbb{R}_{\geq 0}^s \mid Wx = c\},\qquad c := Wx(0) \in \mathbb{R}^d.
\end{align}

For a given rate-constant vector $\kappa^* \in \mathbb{R}_{>0}^m$, a concentration vector $x^* \in \mathbb{R}_{\geq 0}^s$ is called a \defword{steady state} of $G$ if it satisfies
$f(\kappa^*, x^*) = \mathbf{0}$, where $f(\kappa, x)$ denotes the right-hand side of the ODE system~\eqref{eq:sys}. If  all entries of $x^*$ are strictly positive, namely $x^* \in \mathbb{R}_{>0}^s$, then the steady state $x^*$ is called a \defword{positive steady state}.
If there exists $\kappa^* \in \mathbb{R}_{>0}^m$ such that $G$ has a positive steady state, then we say 
$G$ is \defword{consistent}. 
A steady state $x^*$ is said to be \defword{degenerate} if the restriction of the Jacobian matrix $\text{Jac}_f(\kappa^*, x^*)$ to the \defword{stoichiometric subspace} $S$ is not surjective, i.e., \(
\text{im}\left(\text{Jac}_f(\kappa^*, x^*)|_S \right) \neq S.\)
 A network $G$ is called a \defword{degenerate network} if it is consistent and every positive steady state of $G$ is degenerate.  If a network $G$ admits at least one nondegenerate positive steady state, then it is called a \defword{nondegenerate network}. 
Notice that a nondegenerate network must be consistent.   
For a given network \(G\) and a fixed rate-constant vector \(\kappa^*\), 
we say \(G\) has \emph{non-vacuous  ACR} in species \(X_i\) if \(G\) has positive steady states and the value of \(x_i^*\) is identical for every positive steady state \(x^*\). 
We say a network has \emph{non-vacuous ACR} if it has non-vacuous ACR in some species \cite{si2026}.

\subsection{Algebraic criteria for degeneracy}
\label{sec2.2}
For the network $G$ in \eqref{eq:network}, let $\mathcal N$ and $\mathcal X$
be its stoichiometric matrix and reactant matrix, respectively. The \defword{flux cone} of $\mathcal N$ is defined as
\begin{align}\label{eq:Fn}
\mathcal{F}(\mathcal{N}) := \left\{ \gamma \in \mathbb{R}_{\geq 0}^m \mid \mathcal{N} \gamma = \mathbf{0} \right\}.
\end{align}
Let $\ell^{(1)},\ldots,\ell^{(K)}\in\mathbb{R}_{\geq0}^m$ be a set of generators for $\mathcal{F}(\mathcal{N})$. Then, any $\gamma\in \mathcal{F}(\mathcal{N})$ can be written as
\begin{align*}\label{eq:gamma}
\gamma = \sum_{i=1}^{K} \lambda_i \ell^{(i)},~ \text{with } \lambda_i \geq 0 \ \text{for any } i \in \{1, \ldots, K\}.
\end{align*}
We define $\lambda:=(\lambda_1, \ldots, \lambda_K)^\top$.
The \defword{partial transformed Jacobian matrix} in terms of \(\lambda\) is
defined as
\begin{align}
\label{eq:a}
A(\lambda) := \mathcal{N}\operatorname{diag}\left( \gamma \right)\mathcal{X}^{\top}=\mathcal{N}\operatorname{diag}\left( \sum_{i=1}^{K} \lambda_i \ell^{(i)} \right)\mathcal{X}^{\top}.
\end{align}

\begin{lemma}\defword{\cite[Theorem 3.3]{TangWangZhang2026}}
\label{lem2.1}
    Let $G$ be an $r$-dimensional  network with $s$ species. Assume that $A(\lambda)$ is the partial transformed Jacobian matrix defined in \eqref{eq:a}. Then, the network $G$ is degenerate if and only if $\det(A(\lambda)[I,I])$ is the zero polynomial for all $I \subseteq \{1,\dots,s\}$ with $|I| = r$, where
    \( A(\lambda)[I, I] \) denotes the submatrix  with entries of $A(\lambda)$ with indices \( (i, j) \) in \( I \times I \). 
\end{lemma}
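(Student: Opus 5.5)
The plan is to parametrize all positive steady states through the flux cone. Then express degeneracy of a single steady state as the vanishing of one coefficient of the characteristic polynomial of the Jacobian, and finally separate the dependence on $x$ from the dependence on $\gamma$. Throughout, I read the statement as applying to consistent networks, since degeneracy includes consistency by definition; so I assume $\mathcal F(\mathcal N)\cap\mathbb R^m_{>0}\neq\emptyset$.

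\emph{Step 1 (parametrization).} Under mass-action kinetics, $\partial v_j/\partial x_i=\alpha_{ij}v_j/x_i$, so
\[
\operatorname{Jac}_f(\kappa,x)=\mathcal N\operatorname{diag}(v(\kappa,x))\mathcal X^\top\operatorname{diag}(1/x).
\]
A point $x^*>0$ is a steady state for $\kappa^*$ if and only if $\gamma:=v(\kappa^*,x^*)$ satisfies $\mathcal N\gamma=0$ with $\gamma>0$. Conversely, take any $\gamma\in\mathcal F(\mathcal N)\cap\mathbb R^m_{>0}$ and any $x\in\mathbb R^s_{>0}$, and set $\kappa_j:=\gamma_j/\prod_i x_i^{\alpha_{ij}}$. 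Then $x$ is a positive steady state for this $\kappa$. Hence the pairs (positive steady state, rate constants) correspond exactly to the pairs $(\gamma,x)$ with $\gamma$ a positive element of the flux cone and $x>0$. At such a pair the Jacobian equals $A\operatorname{diag}(1/x)$, where $A=\mathcal N\operatorname{diag}(\gamma)\mathcal X^\top$.

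\emph{Step 2 (degeneracy as a coefficient).} Put $J:=\operatorname{Jac}_f$. Since $\operatorname{im}J\subseteq S$, the map $J$ induces the zero map on $\mathbb R^s/S$. Choosing a basis adapted to $S$ gives $\det(tI-J)=t^{s-r}\det(tI-J|_S)$. Therefore $J|_S$ fails to be surjective exactly when $\det(J|_S)=0$. This happens exactly when the coefficient of $t^{s-r}$ in $\det(tI-J)$ vanishes, and that coefficient is, up to sign, $E_r(J)=\sum_{|I|=r}\det J[I,I]$. Since $J=A\operatorname{diag}(1/x)$, each minor factors, giving
\[
E_r(J)=\sum_{|I|=r}\det A[I,I]\prod_{i\in I}x_i^{-1}.
\]

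\emph{Step 3 (separating $x$ from $\gamma$).} Fix $\gamma$. The monomials $\prod_{i\in I}x_i^{-1}$ are distinct for distinct $I$. A polynomial in $1/x_1,\dots,1/x_s$ that vanishes on the open set $\mathbb R^s_{>0}$ is the zero polynomial. Hence every steady state attached to $\gamma$ is degenerate if and only if $\det A(\gamma)[I,I]=0$ for every $|I|=r$. It follows that $G$ is degenerate if and only if these minors vanish for all positive $\gamma\in\mathcal F(\mathcal N)$.

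\emph{Step 4 (passing to $\lambda$).} Take any $p\in\mathcal F(\mathcal N)$ and a positive $q$ in the cone. Every point $(1-t)p+tq$ with $0<t\le1$ is positive, so positive elements are dense in $\mathcal F(\mathcal N)$. By continuity, the minors vanish on all of $\mathcal F(\mathcal N)$, which is the image of $\lambda\in\mathbb R^K_{\ge0}$. Thus $\det A(\lambda)[I,I]$ is a polynomial in $\lambda$ vanishing on the nonnegative orthant, which has nonempty interior, so it is the zero polynomial. The converse direction is immediate: if each $\det A(\lambda)[I,I]$ is the zero polynomial, then the minors vanish at every $\gamma$ in the cone.

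The main obstacle I expect is Step 2. One must justify carefully that nonsurjectivity of $J|_S$ is captured exactly by $E_r(J)$ even though $J$ is not symmetric and $S$ need not be $J$-invariant in any stronger sense. The block-triangular argument in a basis adapted to $S$ settles this.
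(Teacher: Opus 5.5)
Your proof is correct. The paper gives no proof of this lemma; it imports it from \cite[Theorem~3.3]{TangWangZhang2026}, so there is nothing internal to compare against. What you wrote is the standard convex-parameter argument for criteria of this kind. You parametrize positive steady states by pairs $(\gamma,x)$ via $\kappa_j=\gamma_j/\prod_i x_i^{\alpha_{ij}}$. You identify $\det(J|_S)$ with the sum $E_r(J)$ of the $r\times r$ principal minors, using the block-triangular form in a basis adapted to $S$. That step is sound: $\operatorname{im}J\subseteq S$ makes $S$ invariant, so the worry in your last paragraph is already settled. You then pull $\prod_{i\in I}x_i^{-1}$ out of each minor and separate the distinct square-free monomials. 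Finally, you pass from positive fluxes to the whole cone by density, and from the cone to $\lambda$ because $\mathbb{R}^K_{\ge 0}$ has nonempty interior.

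Two by-products of your write-up are worth keeping. First, the consistency assumption you impose is genuinely needed and is missing from the statement as quoted. Take the inconsistent network $0\to X_1$, $0\to X_2$: there $\mathcal F(\mathcal N)=\{\mathbf 0\}$ (and $\mathcal X=0$), so every minor of $A(\lambda)$ vanishes identically, yet the network is not degenerate by definition. Second, Step~4 shows the criterion is intrinsic. It is equivalent to the vanishing of all principal $r\times r$ minors of $\mathcal N\operatorname{diag}(\gamma)\mathcal X^\top$ on the whole of $\mathcal F(\mathcal N)$, and hence does not depend on the chosen generators $\ell^{(1)},\dots,\ell^{(K)}$.
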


\begin{lemma}
\label{lem2.2}
Let $G$ be an $r$-dimensional  network with $s$ species.  Let $\widetilde{G}$ be an $r$-dimensional network obtained from $G$ by removing some species. If $G$ is degenerate, then $\widetilde{G}$ is also degenerate. 
\end{lemma}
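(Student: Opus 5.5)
The plan is to compare the partial transformed Jacobian matrices of $G$ and $\widetilde{G}$ directly and then apply Lemma~\ref{lem2.1} to both networks. Let $J\subseteq\{1,\dots,s\}$ be the set of indices of the species kept in $\widetilde{G}$. Then the stoichiometric and reactant matrices of $\widetilde{G}$ are the row-submatrices $\widetilde{\mathcal N}=\mathcal N[J,:]$ and $\widetilde{\mathcal X}=\mathcal X[J,:]$, with the same reaction indexing.

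\emph{Step 1: the flux cones coincide.} The rows of $\widetilde{\mathcal N}$ are a subset of the rows of $\mathcal N$. Since $\operatorname{rank}\widetilde{\mathcal N}=r=\operatorname{rank}\mathcal N$, these rows span the whole row space of $\mathcal N$. Hence $\ker\widetilde{\mathcal N}=\ker\mathcal N$, so $\mathcal F(\widetilde{\mathcal N})=\mathcal F(\mathcal N)$, and we may use the same generators $\ell^{(1)},\dots,\ell^{(K)}$ and the same parameters $\lambda$ for both networks.

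\emph{Step 2: $\widetilde A$ is a principal submatrix of $A$.} With the common $\gamma=\sum_i\lambda_i\ell^{(i)}$,
\[
\widetilde A(\lambda)=\widetilde{\mathcal N}\operatorname{diag}(\gamma)\widetilde{\mathcal X}^\top=A(\lambda)[J,J].
\]
Therefore, for every $I\subseteq J$ with $|I|=r$, we have $\widetilde A(\lambda)[I,I]=A(\lambda)[I,I]$. Each such minor is the zero polynomial by Lemma~\ref{lem2.1} applied to $G$. Applying Lemma~\ref{lem2.1} to the $r$-dimensional network $\widetilde G$ then gives that $\widetilde{G}$ is degenerate.

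\emph{Step 3: consistency.} Degeneracy is defined to include consistency, so I will also check consistency explicitly in case Lemma~\ref{lem2.1} is read as presupposing it. Let $x^*>0$ be a positive steady state of $G$ for $\kappa^*$. Define
\[
\widetilde\kappa_j:=\kappa_j^*\prod_{i\notin J}(x_i^*)^{\alpha_{ij}}>0 .
\]
Then the rate vector of $\widetilde G$ at $(\widetilde\kappa,x^*_J)$ equals $v(\kappa^*,x^*)$. Hence $\widetilde{\mathcal N}\,v=(\mathcal N v)_J=\mathbf 0$, and $x^*_J$ is a positive steady state of $\widetilde G$.

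The only subtle point I expect is Step 1: it is exactly where the hypothesis that $\widetilde G$ remains $r$-dimensional is used. A minor technicality is that deleting species may turn some reaction into one with identical reactant and product, giving a zero column. Such a column contributes nothing to $\widetilde{\mathcal N}\operatorname{diag}(\gamma)\widetilde{\mathcal X}^\top$, so it can be dropped or kept without affecting the argument.
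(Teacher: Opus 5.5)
Your proof is correct and follows the paper's argument: the partial transformed Jacobian of $\widetilde G$ is the principal submatrix $A(\lambda)[J,J]$, so applying Lemma~\ref{lem2.1} first to $G$ and then to $\widetilde G$ gives the result. Your Steps~1 and~3 make explicit two points the paper leaves implicit: the rank hypothesis forces $\ker\widetilde{\mathcal N}=\ker\mathcal N$, so the flux cones are equal, and consistency passes to $\widetilde G$. Step~1 also shows that your zero-column worry cannot arise, since $e_j\in\ker\widetilde{\mathcal N}=\ker\mathcal N$ would force a zero column in $\mathcal N$.
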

\begin{proof}
 Let $A(\lambda)$ be the partial transformed Jacobian matrix of $G$ defined as in \eqref{eq:a}. Since the network $G$ is degenerate, by Lemma~\ref{lem2.1},  it holds that
\begin{align*}\det(A( \lambda)[I,I]) = 0~ \text{for all } I \subseteq \{1,\dots,s\} \text{ with } |I| = r.\end{align*}
Without loss of generality, assume that the species in $\widetilde{G}$ are $X_1,\ldots, X_{\widetilde{s}}$. Note that the network $\widetilde{G}$ remains $r$-dimensional, and its partially transformed Jacobian matrix is exactly the submatrix $A(\lambda)[\{1,\ldots,\widetilde{s}\},\{1,\ldots,\widetilde{s}\}]$. 
Hence,  by Lemma~\ref{lem2.1}, the network $\widetilde{G}$ is also degenerate.
\end{proof}

\section{Main Result}
\label{sec3}
In this section, we establish the main result of the paper. 
Theorem~\ref{thm:main} provides a complete structural classification of 
degenerate two-dimensional zero-one networks in terms of two elementary 
prototypes (lumping networks) and their species refinements. 
Theorem~\ref{thm:deg-char} gives the equivalent matrix-theoretic 
formulation, showing that degeneracy is determined entirely by the row 
patterns of the stoichiometric and reactant matrices and can be verified 
by purely structural inspection of the pair $(\mathcal{N},\mathcal{X})$ 
without computation. 
Remark~\ref{rem:binomial} further reveals that every such degenerate 
network has a steady-state system consisting entirely of binomials, so 
its steady-state variety is toric.

The practical scope of these criteria is demonstrated through three 
examples. 
Example~\ref{ex:redox} applies a simple row-count obstruction derived 
from Theorem~\ref{thm:deg-char} to rule out non-vacuous ACR in a 
coarse-grained redox-energy coupling network. 
Example~\ref{ex:multi} exhibits a three-dimensional network where 
multistationarity coexists with the ACR--degeneracy structure. 
Example~\ref{ex:idhkp} treats a three-dimensional,  biologically derived IDHKP--IDH 
regulatory system and verifies that removing its ACR species produces a 
degenerate subnetwork captured precisely by Theorem~\ref{thm:main}.

\begin{definition}
\label{def:species-refinement}
Let $G$ be a network defined as in \eqref{eq:network} with species set
\(
\mathcal S=\{X_1,\ldots,X_s\}.
\)
For each $i\in\{1,\ldots,s\}$, let
$\phi(X_i):=\sum_{k=1}^{n_i}Z_{i,k}$ $(n_i\geq 0)$,   
where each $Z_{i,k}$ denotes a  species, and $\phi(X_i):=0$ if $n_i=0$. Assume that
\begin{align*}
  \{Z_{i,1},
  \ldots, Z_{i,n_i}\}\cap\{Z_{j,1},\ldots, Z_{j,n_j}\}=\emptyset
  \quad\text{for all distinct}\; i, j\in \{1,\ldots,s\}.  
\end{align*}
Define $\phi(G)$ to be the following network:
\begin{align*}
\label{eq:ref}
\sum_{i=1}^{s}\alpha_{ij}\phi(X_i)
~ \xrightarrow{\kappa_j} ~
\sum_{i=1}^{s}\beta_{ij}\phi(X_i),
 \quad 
    \text{for } j=1, \ldots, m.
\end{align*}
If  $\phi(G)$ has the same dimension as $G$, 
then $\phi(G)$ is called a
\defword{species refinement} of $G$, and $G$ is called a
\defword{lumping network} of $\phi(G)$.
\end{definition}
\begin{theorem}[Main Theorem]\label{thm:main}
A two-dimensional zero-one network without trivial species is degenerate if and only if it is a consistent subnetwork of a species refinement of  Fig.~\ref{fig:12} (\subref{fig:1}) or  Fig.~\ref{fig:12} (\subref{fig:2}).
\end{theorem}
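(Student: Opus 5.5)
The figure with the two prototypes is not shown, so I will work from their names. The complete paired-exchange network should consist of two species whose reactions exchange pairs (for instance $X_1\to X_2$, $X_2\to X_1$ together with $0$ or catalytic analogues). The catalytic-pair conversion network should consist of reactions of the form $X_1+X_2\to X_2+X_3$ type conversions driven by a common catalyst pair. In both, every reactant complex gives the same monomial up to rescaling on each flux-cone ray. I plan to prove the two directions separately, reducing the ``only if'' direction to the two-species classification announced as Theorem~\ref{thm:2s}, using Lemma~\ref{lem2.1} and Lemma~\ref{lem2.2} throughout.

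\emph{Sufficiency.} Let $G$ be a consistent subnetwork of a species refinement $\phi(H)$, where $H$ is one of the prototypes. First I would check directly from Lemma~\ref{lem2.1} that $H$ and its two-dimensional consistent subnetworks are degenerate. This is a finite computation: list the flux cone generators and check that all $2\times2$ principal minors of $A(\lambda)$ vanish.

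Next I would show degeneracy is preserved under refinement. Each row of $\mathcal N$ and $\mathcal X$ of $\phi(H)$ equals the row of the species $X_i$ it refines. Hence $A(\lambda)$ for $\phi(H)$ is $P A_H(\lambda) P^\top$, where $P$ is the $0/1$ matrix with rows $e_i$. Any $2\times2$ principal minor of this matrix is either a principal minor of $A_H$, or a minor that uses two copies of the same index, which gives two equal rows and hence vanishes. The flux cone is unchanged, because the columns of $\mathcal N$ only repeat entries and the dimension is preserved.

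For passing to a consistent subnetwork, the flux cone shrinks to a face, which amounts to setting some $\lambda_i=0$. Minors that vanish identically still vanish, and consistency guarantees the network is actually in the degenerate (rather than inconsistent) class. So sufficiency follows from Lemma~\ref{lem2.1}.

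\emph{Necessity.} Let $G$ be degenerate, two-dimensional, zero-one, and without trivial species. Pick any two species whose rows of $\mathcal N$ are linearly independent. Restricting to them gives a two-dimensional network, which is degenerate by Lemma~\ref{lem2.2}. Then Theorem~\ref{thm:2s} says that each such two-species projection is a consistent subnetwork of one of a small list of prototype patterns. Since $\operatorname{rank}\mathcal N=2$, every row of $\mathcal N$ is a combination of two fixed independent rows, and zero-one constraints sharply restrict that combination.

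I would then show that each species row $(\mathcal N_i,\mathcal X_i)$ must coincide with one of the finitely many row patterns of the prototype, possibly the zero-row pattern corresponding to $n_i=0$. This is done by pairing the species $i$ with each of the two base species and applying the two-species classification to both pairs. Grouping species with identical row patterns then defines the lumping map $\phi$, which exhibits $G$ as a subnetwork of $\phi(H)$. Consistency is automatic, since degenerate networks are consistent by definition.

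The main obstacle is the two-species classification itself. It requires enumerating reaction columns whose entries lie in $\{-1,0,1\}$ with reactants in $\{0,1\}$, determining the flux cone, and forcing the principal minors of $A(\lambda)$ to vanish as polynomials in $\lambda$. I would organize this by the sign patterns of the columns and exploit that each $2\times2$ principal minor is a quadratic form in $\lambda$, so every coefficient must vanish. A second delicate point is compatibility across different base pairs: all pairwise patterns must glue to a single prototype rather than mixing the two. I would settle this by showing that one three-species pattern forced by mixing yields a nonvanishing $2\times2$ minor.
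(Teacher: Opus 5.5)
Your plan is essentially the paper's proof: necessity reduces via Lemma~\ref{lem2.2} to the two-species classification (Theorem~\ref{thm:2s}), writes $\mathcal N_i=a_i\mathcal N_1+b_i\mathcal N_2$, and pins down each $(\mathcal N_i,\mathcal X_i)$ by pairing species $i$ with the two base species (this is Lemma~\ref{lm:cl} plus the case analysis in the proof of Theorem~\ref{thm:deg-char}), while your sufficiency argument via $A_{\phi(H)}=PA_HP^\top$ and restriction to a face of the flux cone is a valid, slightly more direct substitute for the paper's pairwise appeal to Theorem~\ref{thm:2s}. Your guesses for the prototypes are wrong (the paired-exchange network has four species $X_1,X_2,Y_1,Y_2$ with reactant complexes $X_1+X_2$ and $Y_1+Y_2$, and the catalytic-pair network has six species with reactant complexes $X_1+X_2+X_3$ and $Y_1+Y_2+Y_3$), but nothing in your argument depends on them.
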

\begin{figure}[H]
    \centering
    \begin{subfigure}{0.495\textwidth}
        \centering
 \includegraphics[width=\textwidth]{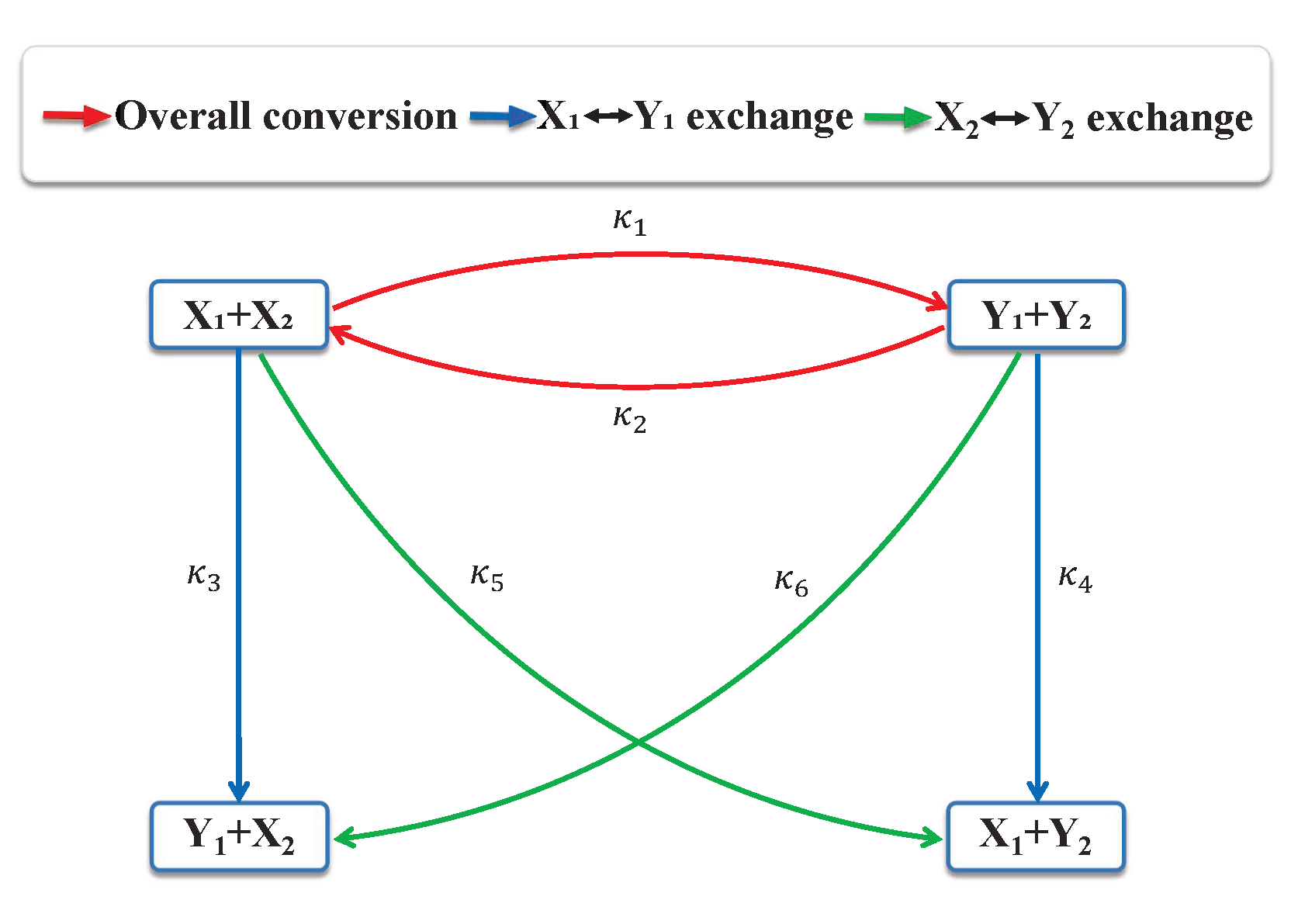}
        \caption{Complete paired-exchange network}
        \label{fig:1}
    \end{subfigure}
    \hfill
    \begin{subfigure}{0.48\textwidth}
        \centering
 \includegraphics[width=\textwidth]{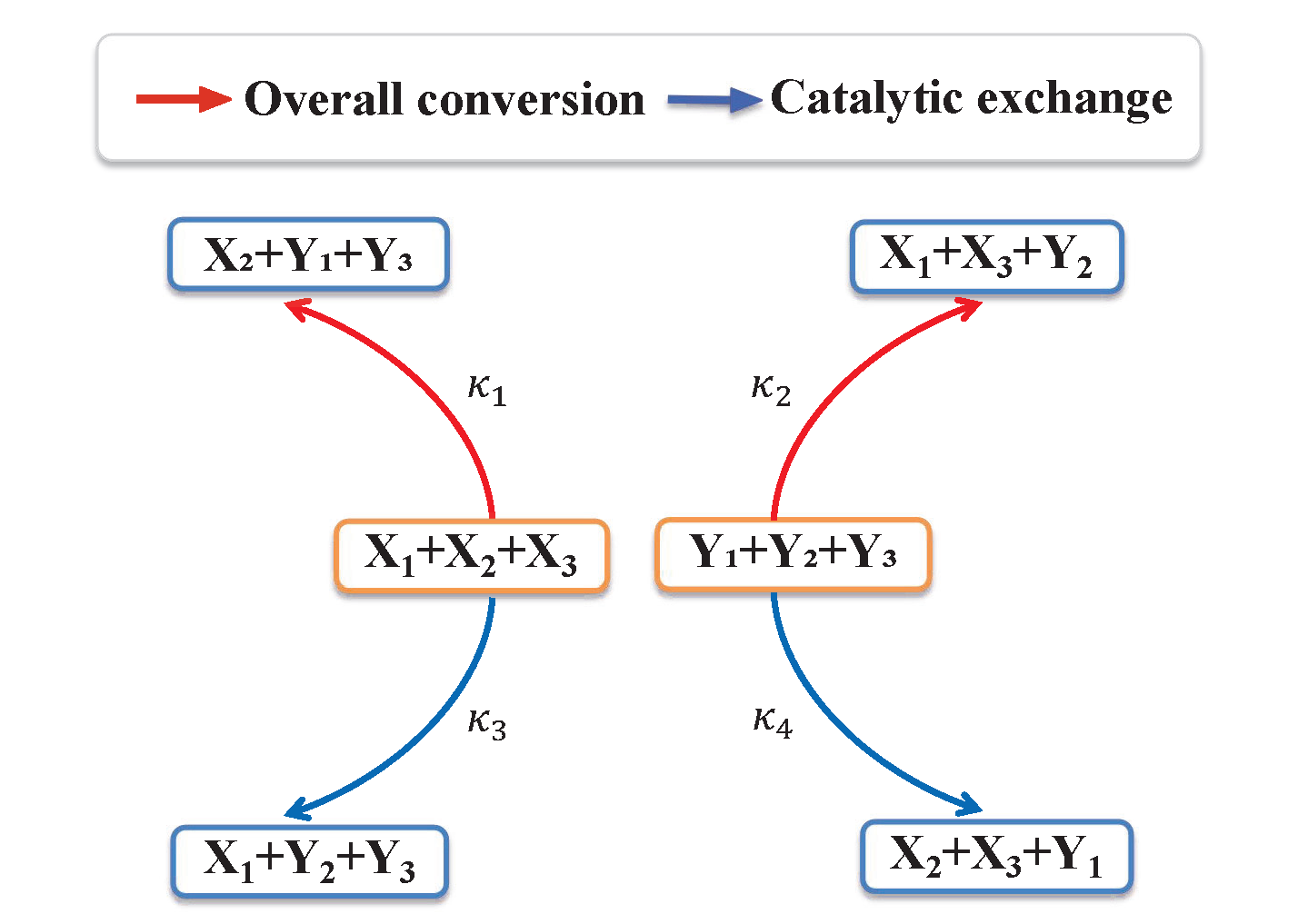}
        \caption{ Catalytic-pair conversion network}
  \label{fig:2}
    \end{subfigure}
    \caption{Two lumping networks stated in Theorem~\ref{thm:main}}
    \label{fig:12}
\end{figure}
\begin{remark}
\label{rem:sub}
Note that the network in Fig.~\ref{fig:12} (\subref{fig:1}) has five consistent subnetworks (see Fig.~\ref{fig:34567}), while that in Fig.~\ref{fig:12} (\subref{fig:2}) has none.
\end{remark}

The following Theorem \ref{thm:deg-char} gives an equivalent matrix-theoretic formulation of Theorem~\ref{thm:main}.

\begin{definition}
\label{def3.42}
Let $\boldsymbol{\alpha}=(\alpha_1,\ldots,\alpha_n)\in\{0,1\}^n$. The \defword{complement} of $\boldsymbol{\alpha}$ is  $\bar{\boldsymbol{\alpha}}:=(\bar\alpha_1,\ldots,\bar\alpha_n)\in\{0,1\}^n$, where $\bar\alpha_i$ is the negative of $\alpha_i$ modulo $2$, i.e. $\bar\alpha_i=1-\alpha_i$ for every $i\in\{1,\ldots,n\}$.
\end{definition}
\begin{theorem}[Matrix Version of Main Theorem]
\label{thm:deg-char}
Let $G$ be a two-dimensional zero-one network with $s$ species and no trivial species. Let $\mathcal{N}$ and $\mathcal{X}$ be the stoichiometric
matrix and the reactant matrix of $G$, respectively.    Then, $G$ is degenerate if and only if one of the following two statements holds. 
\begin{itemize}
\item[(I)]
For every
$i\in\{1,\ldots,s\}$, we have
\begin{align}\label{eq:b.7}
(\mathcal{N}_i,\mathcal{X}_i)
\in
\bigl\{
(\mathcal{E}_1,\mathcal{Y}),\,
(-\mathcal{E}_1,\bar{\mathcal{Y}}),\,
(\mathcal{E}_2,\mathcal{Y}),\,
(-\mathcal{E}_2,\bar{\mathcal{Y}})
\bigr\},
\end{align}
where 
$\begin{pmatrix}
\mathcal{E}_1\\
\mathcal{E}_2
\end{pmatrix}
$  and $\begin{pmatrix}
\mathcal{Y}\\
\mathcal{Y}
\end{pmatrix}
$  
correspond to a consistent subnetwork of 
\begin{align}
\mathcal{N}=\left(
\begin{array}{cccccc}
-1 & 1 & -1 & 1 & 0 & 0\\
-1 & 1 & 0 & 0 &-1 & 1
\end{array}\right), ~\mathcal{X}=\left(
\begin{array}{cccccc}
1 & 0 & 1 & 0 & 1 & 0\\
1 & 0 & 1 & 0 & 1 & 0
\end{array}\right).\label{eq:mainnet1}
\end{align}
\item[(II)] 
For every
$i\in\{1,\ldots,s\}$, we have
\begin{align}\label{eq:b.1}
(\mathcal{N}_i,\mathcal{X}_i)
\in
\bigl\{
(\mathcal{E}_1,\mathcal{Y}),\,
(-\mathcal{E}_1,\bar{\mathcal{Y}}),\,
(\mathcal{E}_2,\mathcal{Y}),\,
(-\mathcal{E}_2,\bar{\mathcal{Y}}),\,
(\mathcal{E}_3,\mathcal{Y}),\,
(-\mathcal{E}_3,\bar{\mathcal{Y}})
\bigr\},
\end{align}
where 
$\begin{pmatrix}
\mathcal{E}_1\\
\mathcal{E}_2 \\
\mathcal{E}_3
\end{pmatrix}
$  and $\begin{pmatrix}
\mathcal{Y}\\
\mathcal{Y}\\
\mathcal{Y}
\end{pmatrix}
$ 
correspond to a consistent subnetwork of 
\begin{align}
\label{eq:mainnet2}
{\mathcal N}=\left(
\begin{array}{cccc}
-1 & 1 & 0 & 0 \\
0 & 0 & -1 & 1 \\
-1 & 1 & -1 & 1 
\end{array}\right),~\mathcal{X}=\left(
\begin{array}{cccc}
1 & 0 & 1 & 0 \\
1 & 0 & 1 & 0 \\
1 & 0 & 1 & 0 
\end{array}\right). 
\end{align}
\end{itemize}
\end{theorem}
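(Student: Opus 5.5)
The plan is to prove Theorem~\ref{thm:deg-char} by reducing it to the two-species classification (Theorem~\ref{thm:2s}, to be proved in Section~\ref{sec4.1}) and then gluing via Lemma~\ref{lem2.2}, exploiting that degeneracy is tested by the $2\times 2$ principal minors of $A(\lambda)=\mathcal{N}\operatorname{diag}(\gamma)\mathcal{X}^\top$ (Lemma~\ref{lem2.1}). Each such minor involves only two rows $i,j$ of $(\mathcal{N},\mathcal{X})$. So degeneracy of $G$ is equivalent to the condition that, for every pair $\{i,j\}$, the $2\times2$ minor $A_{ii}A_{jj}-A_{ij}A_{ji}$ vanishes identically on the flux cone $\mathcal{F}(\mathcal{N})$ of the whole network, together with consistency.

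\emph{Sufficiency.} I will first record what the row patterns force. If every row is $(\pm\mathcal{E}_k,\mathcal{Y}\text{ or }\bar{\mathcal{Y}})$, then each species is either a copy of a lumped species or an ``anti-copy'' (it appears on the product side exactly where the copy is consumed). This is precisely a species refinement $\phi$ of the prototype restricted to a consistent subnetwork. Consequently $\mathcal{F}(\mathcal{N})$ coincides with the flux cone of the lumped network (I) or (II). For $\gamma\in\mathcal{F}$, I will compute directly $\mathcal{N}_i\operatorname{diag}(\gamma)\bar{\mathcal{Y}}^\top=\mathcal{N}_i\operatorname{diag}(\gamma)(\mathbf 1-\mathcal{Y})^\top=-\mathcal{N}_i\operatorname{diag}(\gamma)\mathcal{Y}^\top$, using $\mathcal{N}_i\gamma=0$. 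Hence every column of $A$ is $\pm$ the vector $\mathcal{N}\operatorname{diag}(\gamma)\mathcal{Y}^\top$, so $A$ has rank at most $1$ and all $2\times2$ principal minors vanish. Consistency is inherited from the consistent subnetwork, since refinement preserves positive steady states: a binomial steady state of the lumped network lifts by assigning all refined copies compatible values. This step also yields the binomial remark.

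\emph{Necessity.} For a degenerate $G$ without trivial species, I will choose two species whose rows of $\mathcal{N}$ are linearly independent. The corresponding two-species network $\widetilde G$ is still two-dimensional, so it is degenerate by Lemma~\ref{lem2.2}, and Theorem~\ref{thm:2s} forces it into the $2$-species version of the prototype. This pins down $\mathcal{E}_1,\mathcal{E}_2,\mathcal{Y}$. Next, for any third species $X_k$, I will apply the same argument to each two-dimensional pair $\{i,k\}$ or $\{j,k\}$; for pairs with proportional $\mathcal{N}$-rows I will use the $2\times 2$ minor condition directly. The goal is to deduce that $\mathcal{N}_k\in\{\pm\mathcal{E}_1,\pm\mathcal{E}_2,\pm\mathcal{E}_3\}$ with $\mathcal{X}_k$ equal to $\mathcal{Y}$ or $\bar{\mathcal{Y}}$ matching the sign. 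The dichotomy between (I) and (II) arises according to whether a row equal to $\mathcal{E}_1+\mathcal{E}_2$ (the catalytic pair) appears, which is compatible only with the second prototype.

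The main obstacle will be Theorem~\ref{thm:2s} itself: enumerating all two-species two-dimensional zero-one networks and showing that the identically vanishing minor $\det A(\lambda)$, a polynomial in the flux-cone generators, forces the paired-exchange pattern. I would attack it by noting that $\det A=\sum_{p<q}\gamma_p\gamma_q\det(\mathcal{N}_{\{p,q\}})\det(\mathcal{X}_{\{p,q\}})$ by Cauchy--Binet. Vanishing on the cone then forces cancellations among products of $2\times 2$ minors over supports of flux vectors, and I would case-split on the reactant columns $(0,0),(1,0),(0,1),(1,1)$. A secondary difficulty is ensuring that a third species' row cannot be a new pattern consistent with all pairwise minors; I expect this to come down to a sign-pattern check against the fixed flux cone.
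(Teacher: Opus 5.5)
Your sufficiency argument is correct and differs from the paper's. The paper checks that every two-species subnetwork $G^{i,j}$ is either one-dimensional or a consistent subnetwork of \eqref{eq:2snet1} or \eqref{eq:2snet2}, and then applies Theorem~\ref{thm:2s} pair by pair. You instead observe that column $j$ of $A$ is $\mathcal{N}\operatorname{diag}(\gamma)\mathcal{X}_j^\top=\pm\mathcal{N}\operatorname{diag}(\gamma)\mathcal{Y}^\top$, because $\bar{\mathcal{Y}}=\mathbf{1}-\mathcal{Y}$ and $\mathcal{N}\gamma=\mathbf{0}$, so $\operatorname{rank}A\le 1$. This is more direct, uses only the reactant rows, and needs no two-species classification. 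For consistency, the simplest justification is this: a positive vector in the kernel of the lumped matrix with rows $\mathcal{E}_1,\mathcal{E}_2$ lies in $\ker\mathcal{N}$, since every row of $\mathcal{N}$ is a combination of $\mathcal{E}_1,\mathcal{E}_2$.

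The necessity direction has a genuine gap. Applying Lemma~\ref{lem2.2} and Theorem~\ref{thm:2s} to an independent pair is also the paper's first step. The real content, however, is controlling a third row, and you only announce it (``I expect this to come down to a sign-pattern check against the fixed flux cone''). Since Theorem~\ref{thm:2s} may be cited here, this third-row step, not the two-species case, is the main obstacle. Pairs with proportional rows give no information: if $\mathcal{N}_k=\pm\mathcal{N}_i$, then $A[\{i,k\},\{i,k\}]$ has rank at most one whatever $\mathcal{X}_k$ is.

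The missing idea is the paper's Lemma~\ref{lm:cl} together with Cases~1--3. Write $\mathcal{N}_k=a\mathcal{N}_1+b\mathcal{N}_2$ and read off entries by the column types of $\mathcal{N}^{1,2}$:
a $B_3$ column $(\varepsilon,0)^\top$ yields the entry $a\varepsilon$;
a $B_4$ column yields $b\varepsilon$;
a $B_1$ or $B_2$ column yields $(a\pm b)\varepsilon$.
Entries in $\{-1,0,1\}$ then bound $(a,b)$, and the surviving bad candidates must be eliminated one by one:
\begin{itemize}
\item $(a,b)=\pm(1,-2)$, which can occur when only $B_1$ and $B_3$ columns are present. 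Then $\mathcal{N}^{1,k}$ contains both a $B_1$ and a $B_2$ column, contradicting Lemma~\ref{lm:b1b2}.
\item $|a|=|b|=1$ when three column types occur. One sign pattern already violates the entry bound on the $B_1$ or $B_2$ column.
\item The other sign pattern, and $\mathcal{N}_k=\pm(\mathcal{E}_1-\mathcal{E}_2)$ in the $B_3$/$B_4$ case. Here Theorem~\ref{thm:2s}, applied to the two-dimensional pairs $\{1,k\}$ and $\{2,k\}$, forces $\mathcal{X}_k$ to equal both $\mathcal{X}_1$ and $\bar{\mathcal{X}}_1$.
\end{itemize}

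Finally, your first pair does not by itself ``pin down'' $\mathcal{E}_1,\mathcal{E}_2$. In the $B_1$/$B_3$ and $B_2$/$B_3$ cases it is $(\mathcal{E}_3,\pm\mathcal{E}_1)$, so a relabeling is needed. In the paper, the split between (I) and (II) is organized by how many column types the chosen pair contains.
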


\begin{figure}[htbp]
    \centering
    \begin{subfigure}{0.48\textwidth}
        \centering
        \includegraphics[width=\textwidth]{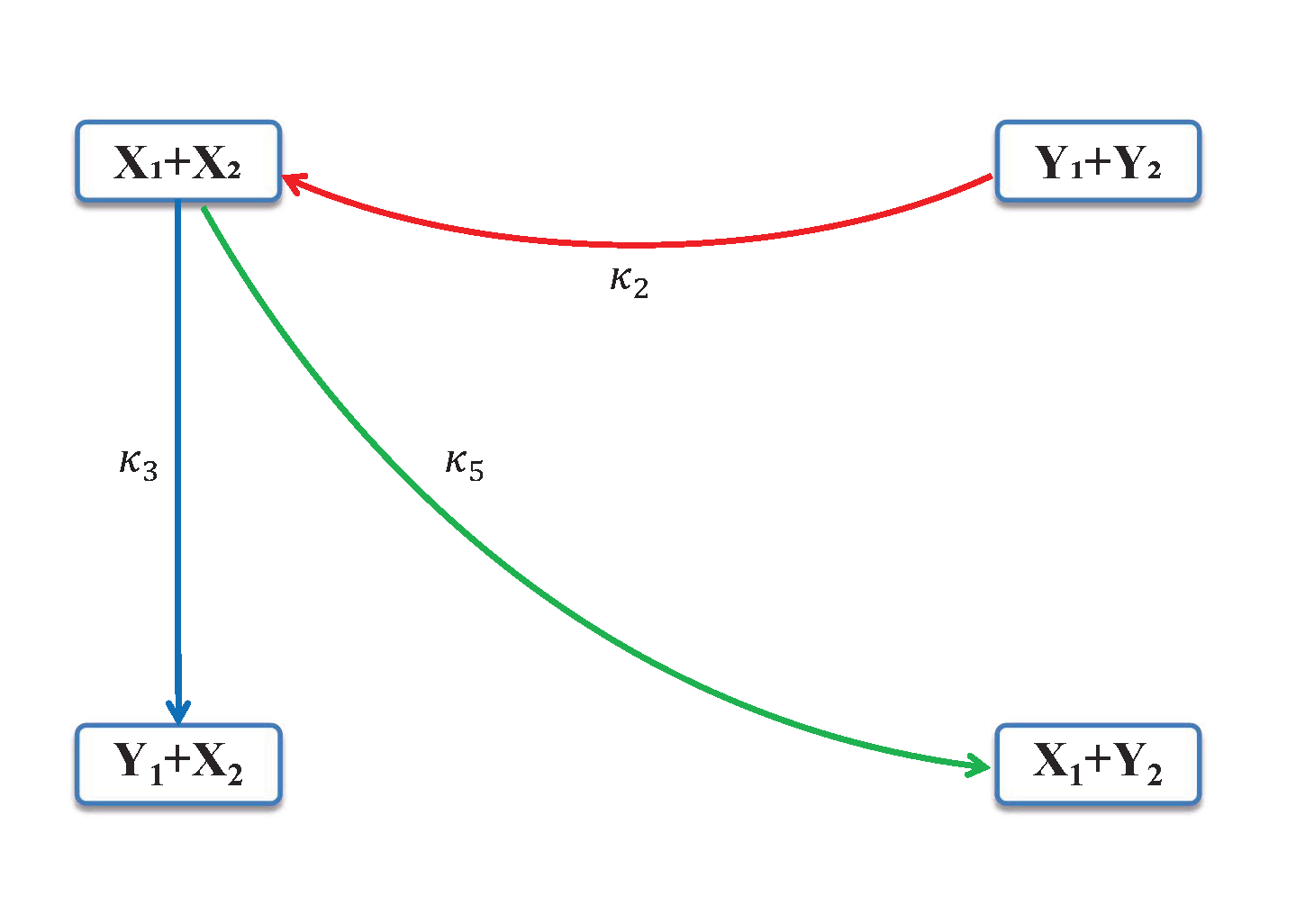}
        \caption{}
        \label{fig:3}
    \end{subfigure}
    \hfill
    \begin{subfigure}{0.48\textwidth}
        \centering
        \includegraphics[width=\textwidth]{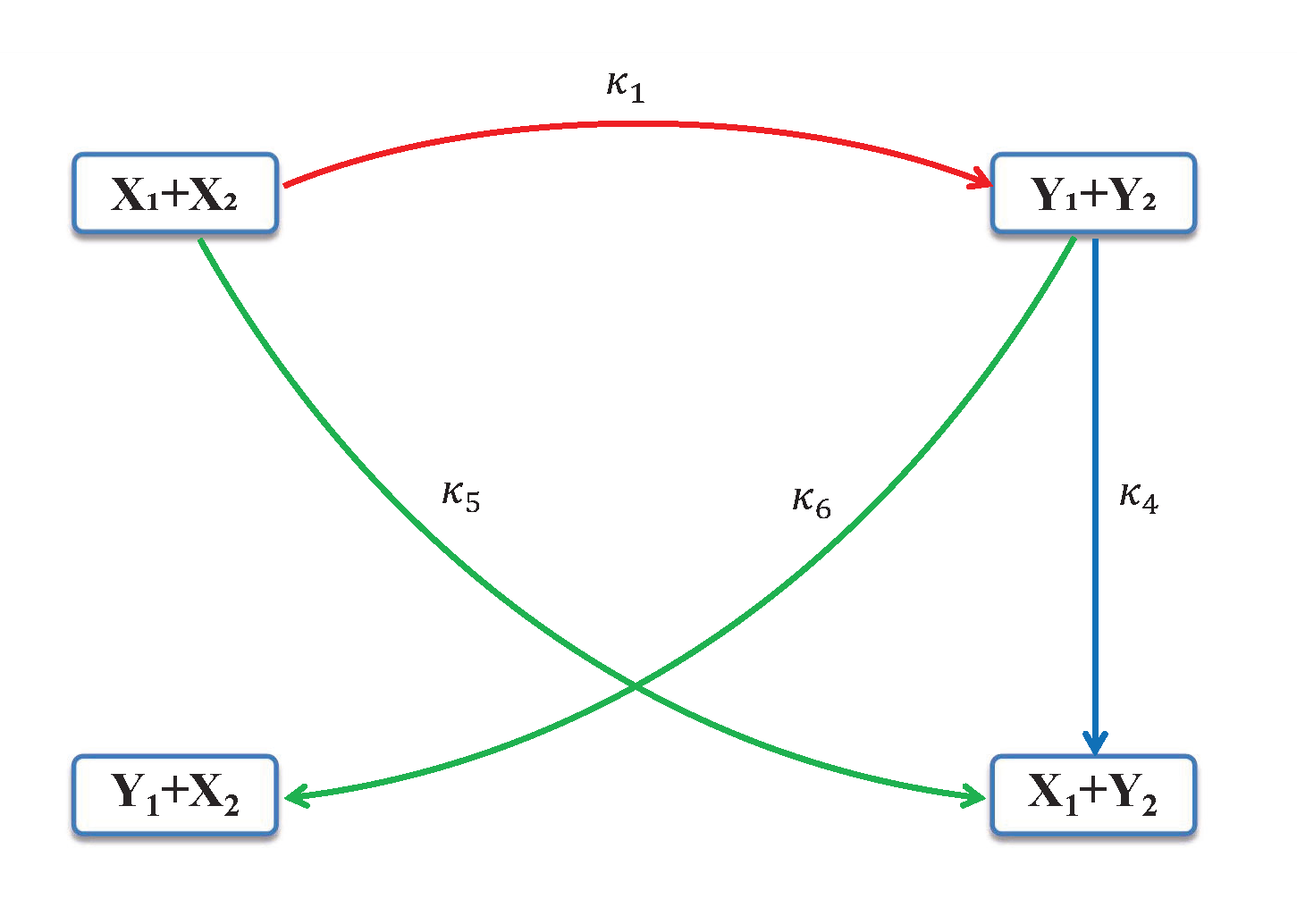}
        \caption{}
        \label{fig:4}
    \end{subfigure}
    \begin{subfigure}{0.48\textwidth}
        \centering
        \includegraphics[width=\textwidth]{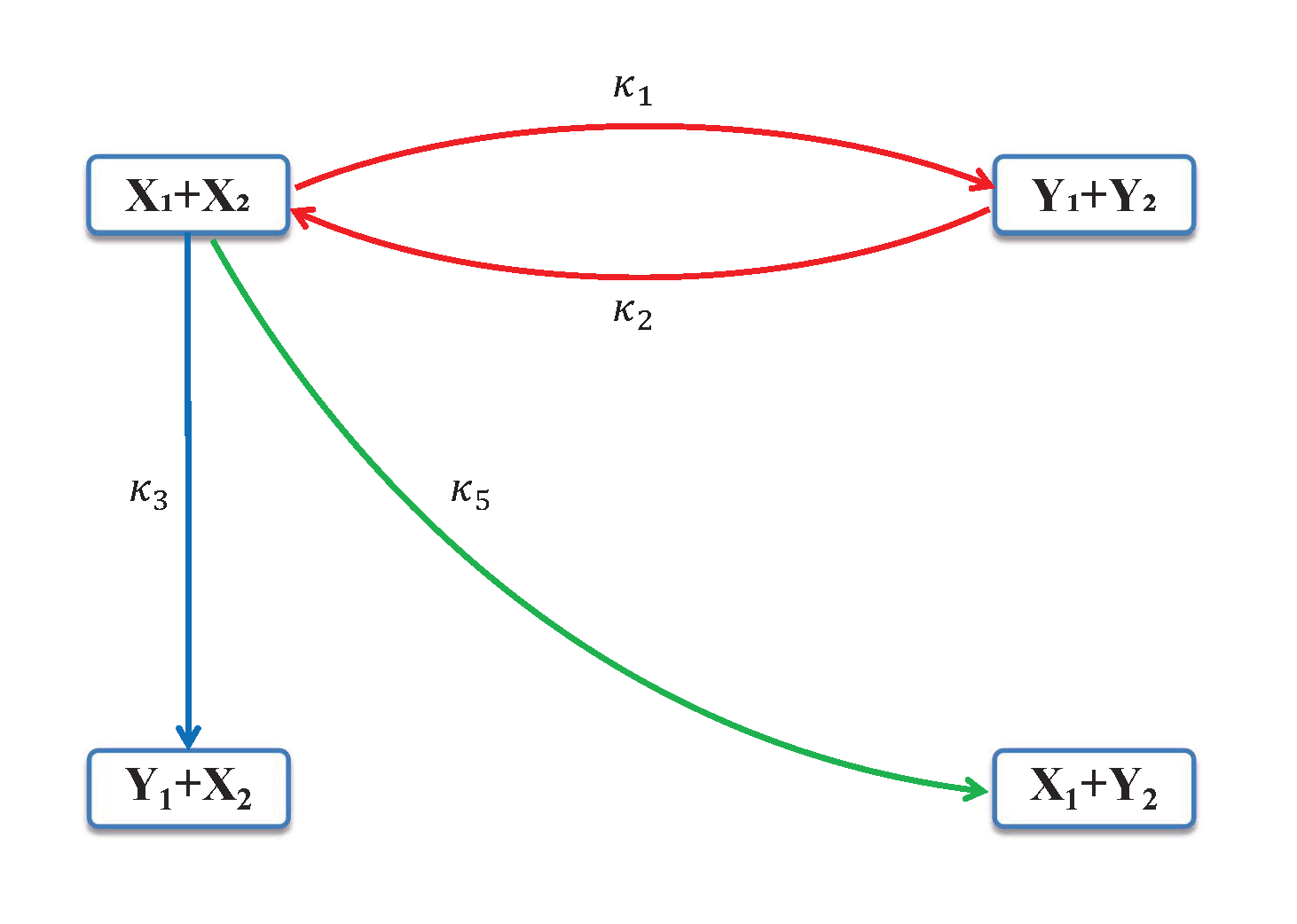}
        \caption{}
        \label{fig:5}
    \end{subfigure}
    \hfill
    \begin{subfigure}{0.48\textwidth}
        \centering
        \includegraphics[width=\textwidth]{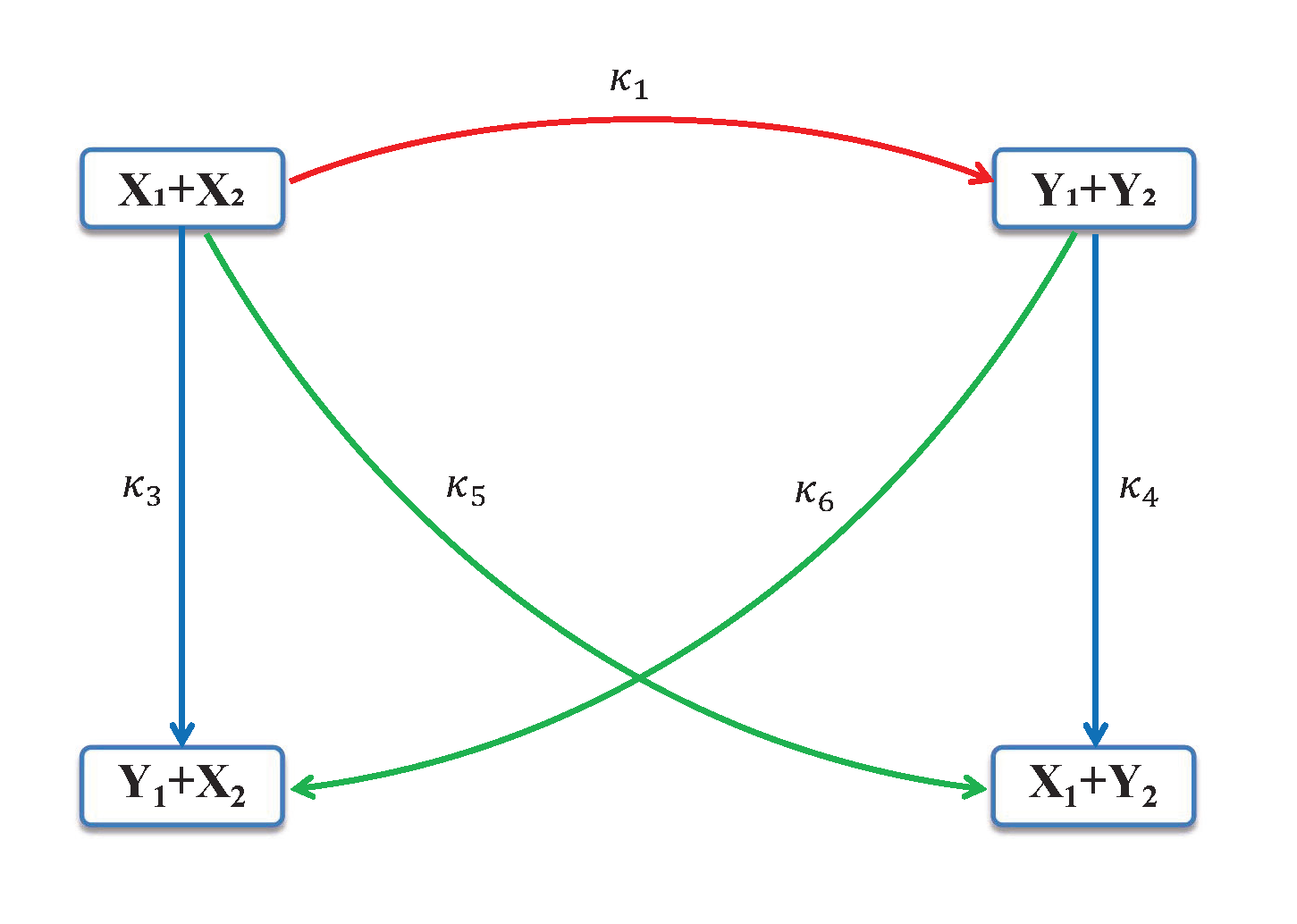}
        \caption{}
        \label{fig:6}
    \end{subfigure}
    \begin{subfigure}{0.48\textwidth}
        \centering
        \includegraphics[width=\textwidth]{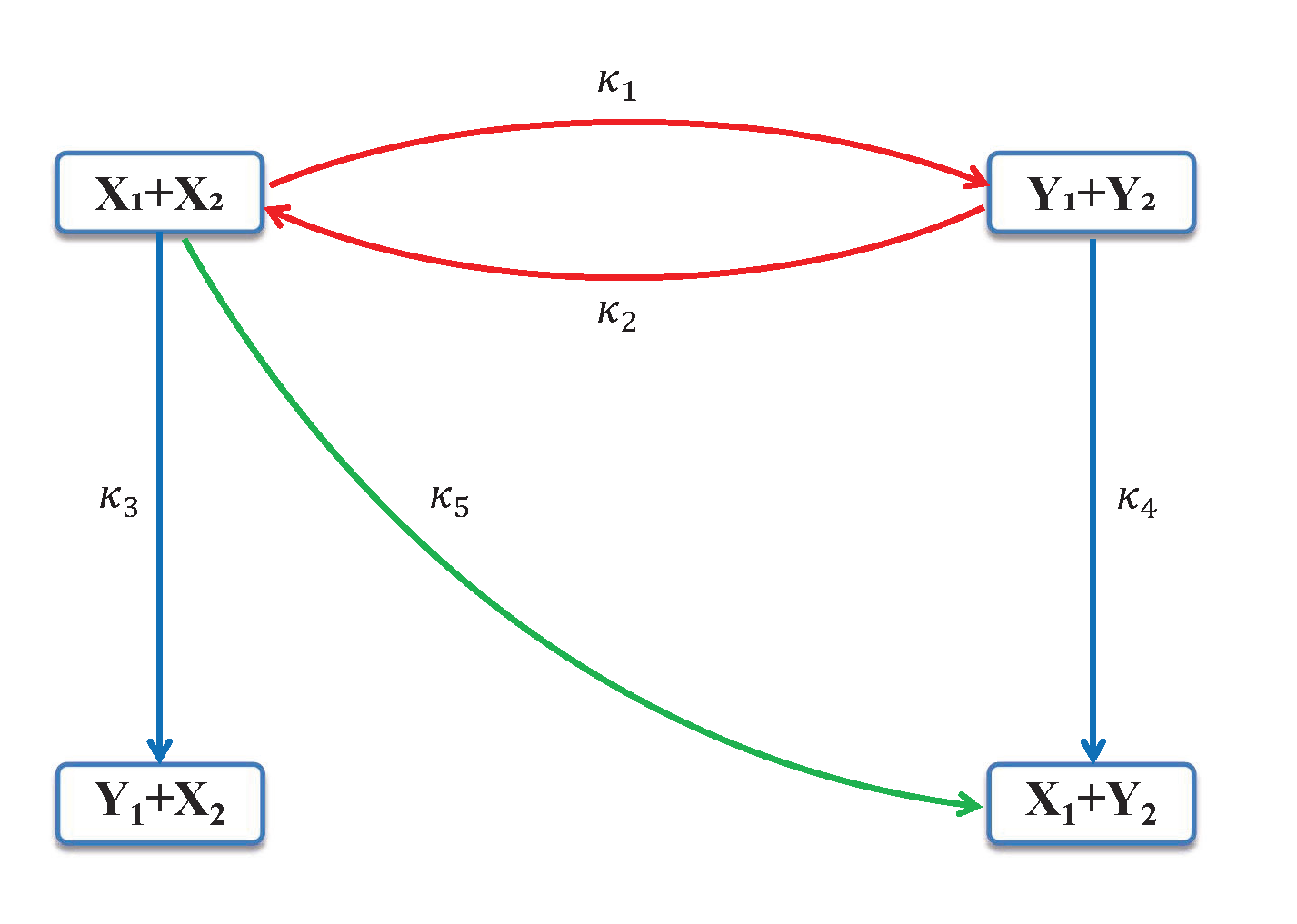}
        \caption{}
        \label{fig:7}
    \end{subfigure}
    \caption{Consistent subnetworks of Fig.~\ref{fig:12} (\subref{fig:1})}
    \label{fig:34567}
\end{figure}

\begin{remark}
\label{rem:binomial}
For the network in Fig.~\ref{fig:12} (\subref{fig:1}), the steady-state system is
\begin{align*}
\left\{
\begin{aligned}
f_1&=-(\kappa_1+\kappa_3)x_1x_2
     +(\kappa_2+\kappa_4)y_1y_2,\\
f_2&=-(\kappa_1+\kappa_5)x_1x_2
     +(\kappa_2+\kappa_6)y_1y_2,\\
f_3&=-f_1,~f_4=-f_2.
\end{aligned}\right.
\end{align*}
Under any species refinement, replacing $x_i$ and $y_i$ by
$\prod_{k=1}^{p_i}x_{i,k}$ and
$\prod_{k=1}^{q_i}y_{i,k}$, respectively, turns the first two polynomials into
\begin{align*}
\left\{
\begin{aligned}
f_1&=-(\kappa_1+\kappa_3)
\prod_{i=1}^{2}\prod_{k=1}^{p_i}x_{i,k}
+(\kappa_2+\kappa_4)
\prod_{i=1}^{2}\prod_{k=1}^{q_i}y_{i,k},\\
f_2&=-(\kappa_1+\kappa_5)
\prod_{i=1}^{2}\prod_{k=1}^{p_i}x_{i,k}
+(\kappa_2+\kappa_6)
\prod_{i=1}^{2}\prod_{k=1}^{q_i}y_{i,k}.
\end{aligned}
\right.
\end{align*}
Similarly, for the network in Fig.~\ref{fig:12} (\subref{fig:2}), the steady-state system is
\begin{align*}
\left\{
\begin{aligned}
&f_1=-\kappa_1x_1x_2x_3+\kappa_2y_1y_2y_3,
~
f_2=-\kappa_3x_1x_2x_3+\kappa_4y_1y_2y_3,\\
&f_3=f_1+f_2,
~f_4=-f_1,
~f_5=-f_2,
~f_6=-f_1-f_2.
\end{aligned}
\right.
\end{align*}
Under any species refinement, the first two polynomials become
\begin{align*}
\left\{
\begin{aligned}
f_1
&=-\kappa_1
\prod_{i=1}^{3}\prod_{k=1}^{n_i}x_{i,k}
+\kappa_2
\prod_{i=1}^{3}\prod_{k=1}^{s_i}y_{i,k},\\
f_2
&=-\kappa_3
\prod_{i=1}^{3}\prod_{k=1}^{n_i}x_{i,k}
+\kappa_4
\prod_{i=1}^{3}\prod_{k=1}^{s_i}y_{i,k}.
\end{aligned}
\right.
\end{align*}
In both cases, every $f_i$ is a linear combination of $f_1$ and $f_2$, which share the same two monomials. Consequently, for both networks and any consistent subnetwork of their species refinements, every steady-state polynomial is a binomial. By Theorem~\ref{thm:main}, the same holds for every degenerate two-dimensional zero-one network without trivial species. Thus, Theorem~\ref{thm:main} generalizes \cite[Theorem~6.1]{TangWangZhang2026}, which was obtained purely by computational enumeration.
\end{remark}

\begin{example}
\label{ex:redox}
Consider the following coarse-grained biochemical network \(G\) consisting of
two reversible reactions coupled through the redox pair
NADH/NAD$^+$ and the energy pair ATP/ADP:
\begin{align}
\label{eq:exam1}
S_1+\mathrm{NADH}+\mathrm{ATP}
&\xrightleftharpoons[\kappa_2]{\kappa_1}
P_1+\mathrm{NAD}^++\mathrm{ADP}, \notag\\
S_2+\mathrm{NADH}+\mathrm{ADP}
&\xrightleftharpoons[\kappa_4]{\kappa_3}
P_2+\mathrm{NAD}^++\mathrm{ATP}.
\end{align}
For convenience, denote the $8$ species by
\begin{align*}
(S_1,P_1,S_2,P_2,\mathrm{NADH},\mathrm{NAD}^+,
\mathrm{ATP},\mathrm{ADP})
=
(X_1,Y_1,X_2,Y_2,X_3,Y_3,X_4,Y_4).   
\end{align*}
With the species ordered as
\(
X_1,Y_1,X_2,Y_2,X_3,Y_3,X_4,Y_4,
\)
the stoichiometric matrix is
\begin{align*}
 \mathcal N=
\begin{pmatrix}
-1& 1& 0& 0\\
 1&-1& 0& 0\\
 0& 0&-1& 1\\
 0& 0& 1&-1\\
-1& 1&-1& 1\\
 1&-1& 1&-1\\
-1& 1& 1&-1\\
 1&-1&-1& 1
\end{pmatrix}.   
\end{align*}
Clearly,
\(
\operatorname{rank}(\mathcal N)=2,
\)
and the $8$ rows of $\mathcal N$ are pairwise distinct. By
\cite[Theorem 3.1]{si2026}, if the network has non-vacuous ACR in a species $X_i$ (or $Y_i$) for any
generic choice of rate constants, then
after removing the species, the resulting $7$-species network must be degenerate. However, Theorem \ref{thm:deg-char} implies that any degenerate two-dimensional zero-one network has at most $6$ distinct rows in its stoichiometric matrix. So, we conclude directly that the network  \eqref{eq:exam1} has no non-vacuous ACR for any
generic choice of rate constants.  

On the other hand, we can verify the above conclusion by computation.  
Let $x_i$ and $y_i$ denote the concentrations of $X_i$ and
$Y_i$, respectively. At every positive steady state, we have \(
\kappa_1x_1x_3x_4=\kappa_2y_1y_3y_4~\text{and}~
\kappa_3x_2x_3y_4=\kappa_4y_2y_3x_4.\)
Hence,
\begin{align*}
 y_1=
\frac{\kappa_1}{\kappa_2}
\frac{x_1x_3x_4}{y_3y_4}~\text{and}~
y_2=
\frac{\kappa_3}{\kappa_4}
\frac{x_2x_3y_4}{y_3x_4}.   
\end{align*}
Thus, for any positive rate constants, the positive steady states form a family in which
\(x_1,x_2,x_3,y_3,x_4,y_4\) may vary. Then, no species has a constant
concentration over all positive steady states, and the network \(G\) in \eqref{eq:exam1} has no non-vacuous ACR.
\end{example}

\begin{example}
\label{ex:multi}
Consider the following zero-one network $G$:
\begin{equation*}
\begin{array}{lll}
Z \xrightarrow{\kappa_1} 0,
&
Y_1+Y_2 \xrightarrow{\kappa_2} X_1+X_2,
&
X_1+X_2+Z \xrightarrow{\kappa_3} Y_1+X_2+Z,
\\[3pt]
X_1 \xrightarrow{\kappa_4} X_1+Z,
&
X_1+X_2 \xrightarrow{\kappa_5} X_1+Y_2,
&
X_2 \xrightarrow{\kappa_6} X_2+Z.
\end{array}
\end{equation*}
With the species ordered as
\(
X_1,X_2,Y_1,Y_2,Z,
\)
the stoichiometric matrix is
\begin{align*}
\mathcal N
=
\begin{pmatrix}
 0& 1&-1&0& 0&0\\
 0& 1& 0&0&-1&0\\
 0&-1& 1&0& 0&0\\
 0&-1& 0&0& 1&0\\
-1& 0& 0&1& 0&1
\end{pmatrix}.
\end{align*}
Notice that the network $G$ is
three-dimensional. Its steady-state system is
\begin{align*}
\left\{
\begin{aligned}
f_1&=\kappa_2y_1y_2-\kappa_3x_1x_2z,
~
f_2=\kappa_2y_1y_2-\kappa_5x_1x_2,\\
f_3&=-f_1,~
f_4=-f_2,~
f_5=-\kappa_1z+\kappa_4x_1+\kappa_6x_2.
\end{aligned}
\right.
\end{align*}
At any positive steady state, \(f_1=f_2=0\) implies \(
\kappa_3x_1x_2z=\kappa_5x_1x_2,
\)
and hence \(
z=\kappa_5/\kappa_3.
\)
Thus, the network $G$ has ACR in species \(Z\). 

Now remove the ACR species \(Z\). After deleting the resulting trivial
reactions, the resulting network is
\begin{align*}
Y_1+Y_2 \xrightarrow{\kappa_2}X_1+X_2,~X_1+X_2 \xrightarrow{\kappa_3} Y_1+X_2,~X_1+X_2 \xrightarrow{\kappa_5} X_1+Y_2,
\end{align*}
which is exactly the network in Fig.~\ref{fig:34567} (\subref{fig:3}).  This example shows that a two-dimensional degenerate network characterized by Theorem~\ref{thm:main} can arise naturally by removing an ACR species from a higher-dimensional network. Conversely, these classified degenerate networks can be used
as building blocks for constructing higher-dimensional networks with
ACR, thereby providing a connection between network degeneracy and ACR. Notably, despite possessing ACR in species $Z$, the network $G$ admits at most two positive steady states in any stoichiometric compatibility class $\mathcal{P}_c$ \eqref{eq:pc}, demonstrating that multistationarity can coexist with the ACR-degeneracy structure.
\end{example}

\begin{example}
\label{ex:idhkp} 
Consider the following modified network \(G\) based on the IDHKP--IDH glyoxylate bypass regulation system \cite[Example~3.3]{joshi2024}:
\begin{align}
\label{eq:exam3}
&\text{\textit{Overall conversion}:} && E + EI_pI + I \xrightarrow{\kappa_1} I_p + EI_p, \notag\\
&\text{\textit{Phosphorylation}:} && E + EI_pI \xrightarrow{\kappa_2} I_p + EI_pI, \notag\\
&\text{\textit{Complex formation}:} && I_p + EI_p + I \xrightarrow{\kappa_3} I_p + EI_pI, \notag\\
&\text{\textit{Complex dissociation}:} && E + EI_pI \xrightarrow{\kappa_4} E + EI_p + I, \notag\\
&\text{\textit{Dephosphorylation}:} && I_p + EI_p \xrightarrow{\kappa_5} E + EI_p + I,
\end{align}
where $I$ and $I_p$ denote the active unphosphorylated and phosphorylated forms of isocitrate dehydrogenase (IDH), respectively, while $E$ denotes the bifunctional regulatory enzyme IDHKP, and $EI_p$, $EI_pI$ are intermediate enzyme--substrate complexes. The network retains the main motifs of complex formation (third reaction) and dissociation (fourth reaction), together with the coupling between the phosphorylated and unphosphorylated forms of IDH through the phosphorylation cycle (second and fifth reactions) and the overall conversion (first reaction).

With the species ordered as $E, EI_pI, I_p, EI_p, I$, the stoichiometric matrix is
\begin{align*}
\mathcal{N}=
\begin{pmatrix}
-1&-1& 0& 0& 1\\
-1& 0& 1&-1& 0\\
 1& 1& 0& 0&-1\\
 1& 0&-1& 1& 0\\
-1& 0&-1& 1& 1
\end{pmatrix},
\end{align*}
so $\operatorname{rank}(\mathcal{N})=3$ and the network is three-dimensional. 
Let $x_1,x_2,y_1,y_2,z$ denote the concentrations of 
$E,EI_pI,I_p,EI_p,I$, respectively. From the steady-state equations,
\[
\kappa_5 y_1 y_2=\kappa_1 x_1 x_2 z+\kappa_2 x_1 x_2=2\kappa_1 x_1 x_2 z.
\]
Since $x_1x_2>0$ at any positive steady state, 
we obtain $z=\kappa_2/\kappa_1$. Hence, the network $G$ in \eqref{eq:exam3} has ACR in species $I$.

Removing the ACR species $I$ and relabeling 
$E,EI_pI,I_p,EI_p$ as $X_1,X_2,Y_1,Y_2$, respectively, yields the 
subnetwork
\begin{align*}
X_1+X_2 &\xrightarrow{\kappa_1} Y_1+Y_2, &
X_1+X_2 &\xrightarrow{\kappa_2} Y_1+X_2, &
Y_1+Y_2 &\xrightarrow{\kappa_3} Y_1+X_2,\\
X_1+X_2 &\xrightarrow{\kappa_4} X_1+Y_2, &
Y_1+Y_2 &\xrightarrow{\kappa_5} X_1+Y_2. &
\end{align*}
After an appropriate relabeling of the rate constants, this is precisely 
the network shown in Fig.~\ref{fig:34567} (\subref{fig:6}).
\end{example}


\section{Proof}
\label{sec4}

In this section, we prove Theorem~\ref{thm:deg-char}, which shows that a two-dimensional zero-one network without trivial species is degenerate if and only if its stoichiometric and reactant matrices satisfy the structure in
\eqref{eq:b.7} or \eqref{eq:b.1}. Theorem~\ref{thm:main} then follows from this characterization. This section is organized as follows.

In Section~\ref{sec4.1}, all degenerate two-species networks are characterized, as stated in Theorem~\ref{thm:2s}. To prove this result, we first recall in Lemma~\ref{lem:lift} the classical
result that nondegeneracy can be lifted from a subnetwork to the whole network. The  column vectors of the stoichiometric matrix are then divided into the four sets \(B_1,B_2,B_3,\) and \(B_4\). Lemma~\ref{lm:b1b2} rules out the case where columns from both
$B_1$ and $B_2$ occur. Based on this result, Lemma~\ref{lm:nece}
restricts the possible forms of the stoichiometric matrix. The remaining cases are then characterized in Lemma~\ref{lm:2cases}, completing the proof of Theorem~\ref{thm:2s}.

In Section~\ref{sec4.2}, we complete the proof of Theorem~\ref{thm:deg-char}. 
Lemma~\ref{lm:cl} gives restrictions on the coefficients in the conservation laws. 
Combining this lemma with the two-species characterization in Theorem~\ref{thm:2s} yields \eqref{eq:b.7} and \eqref{eq:b.1}, and completes the proof of Theorem~\ref{thm:deg-char}.

\subsection{Networks with two species}
\label{sec4.1}

\begin{theorem}
\label{thm:2s}
Let $G$ be a two-dimensional zero-one network with two species.  Then, the network $G$ is degenerate if and only if it is a consistent subnetwork of one of the following two networks:
\begin{align}
\mathcal{N}=\left(
\begin{array}{cccccc}
-1 & 1 & -1 & 1 & 0 & 0\\
-1 & 1 & 0 & 0 &-1 & 1
\end{array}\right), ~\mathcal{X}=\left(
\begin{array}{cccccc}
1 & 0 & 1 & 0 & 1 & 0\\
1 & 0 & 1 & 0 & 1 & 0
\end{array}\right);\label{eq:2snet1}
\end{align}
\begin{align}
\mathcal{N}=\left(
\begin{array}{cccccc}
-1 & 1 & -1 & 1 & 0 & 0\\
1 & -1 & 0 & 0 & 1 & -1
\end{array}\right), ~\mathcal{X}=\left(
\begin{array}{cccccc}
1 & 0 & 1 & 0 & 1 & 0\\
0 & 1 & 0 & 1 & 0 & 1
\end{array}\right).\label{eq:2snet2}
\end{align}
\end{theorem}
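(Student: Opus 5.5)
With two species and $r=s=2$, Lemma~\ref{lem2.1} reduces degeneracy to a single condition: $G$ is degenerate if and only if it is consistent and $\det A(\lambda)$ vanishes identically. Here $A(\lambda)=\sum_j \gamma_j\,\mathrm{col}_j(\mathcal N)\,\mathrm{col}_j(\mathcal X)^\top$ with $\gamma\in\mathcal F(\mathcal N)$. I plan to work column by column and classify the possible reactions.

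\textbf{Setting up the expansion.} A zero-one reaction on two species has reactant $\alpha\in\{0,1\}^2$ and product $\beta\in\{0,1\}^2$ with $\alpha\neq\beta$. There are therefore only $12$ possible columns $(\mathrm{col}_j(\mathcal N),\mathrm{col}_j(\mathcal X))$. Writing $u_j=\mathrm{col}_j(\mathcal N)$ and $w_j=\mathrm{col}_j(\mathcal X)$, each term $u_jw_j^\top$ is rank one. The Cauchy--Binet formula then gives
\[
\det A(\lambda)=\sum_{j<k}\gamma_j\gamma_k\det(u_j,u_k)\det(w_j,w_k).
\]
So $\det A\equiv 0$ on the flux cone if and only if, after substituting the generators $\ell^{(i)}$, all coefficients of the quadratic form in $\lambda$ cancel.

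\textbf{Sufficiency.} For the networks \eqref{eq:2snet1} and \eqref{eq:2snet2}, every reactant vector is either $(1,1)$ or $(0,0)$ in \eqref{eq:2snet1}, and either $(1,0)$ or $(0,1)$ in \eqref{eq:2snet2}. A direct check should show that $\det(w_j,w_k)\neq 0$ forces $\det(u_j,u_k)=0$. In \eqref{eq:2snet1}, for instance, the reactant determinants vanish identically, since $(1,1)$ and $(0,0)$ are dependent. In \eqref{eq:2snet2}, pairs with independent reactants have parallel $u$'s. This property persists for every subnetwork, so every consistent subnetwork is degenerate.

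\textbf{Necessity.} I would follow the roadmap the paper announces. First I recall the lifting lemma (Lemma~\ref{lem:lift}): if a consistent subnetwork is nondegenerate, then so is $G$. It therefore suffices to exhibit, in any network not of the stated form, a small consistent two-dimensional nondegenerate subnetwork. This is typically a cycle of two or three reactions whose $\mathcal F$ is a single ray, which makes $\det A$ a single monomial in $\lambda$ that is easy to test.

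Next I partition the nonzero columns of $\mathcal N$ into classes $B_1,\dots,B_4$ by direction, for example $B_1=\pm(1,1)$, $B_2=\pm(1,-1)$, $B_3=\pm(1,0)$, $B_4=\pm(0,1)$. I first show (Lemma~\ref{lm:b1b2}) that columns from $B_1$ and $B_2$ cannot coexist: a consistent pair drawn from these two classes, possibly together with axis moves, yields a nonzero monomial. With one of $B_1,B_2$ absent, I then show that the reactant vectors attached to each stoichiometric column are forced. For instance, $-(1,1)$ must have reactant $(1,1)$, since zero-one constraints force this. An axis column $(-1,0)$ must then have reactant $(1,1)$ in the $B_1$ case, and $(1,0)$ in the $B_2$ case. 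Otherwise a consistent pairing with an opposite column produces a surviving term in the Cauchy--Binet sum.

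\textbf{Main obstacle.} The hardest step is this enumeration. One must check each allowed column type against each possible companion and verify, via an explicit consistent sub-cycle, that a nonzero monomial survives. The difficulty is that cancellation could in principle occur between different pairs sharing the same $\lambda_i\lambda_k$ coefficient. I would handle this by always choosing minimal consistent subnetworks, two reversible-type reactions or a three-cycle, whose flux cone is one-dimensional. In that setting the coefficient is a single sum of terms that all have the same sign pattern and so cannot cancel. Once the column types are pinned down, they are exactly the columns of \eqref{eq:2snet1} or \eqref{eq:2snet2}, and consistency is part of the definition of degeneracy, which completes the proof.
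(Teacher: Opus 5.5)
\textbf{Sufficiency for \eqref{eq:2snet2}.} Your Cauchy--Binet expansion is correct and convenient, but your sufficiency argument for \eqref{eq:2snet2} is false as stated. In that network, $X_1\to 0$ (with $u=(-1,0)^\top$, $w=(1,0)^\top$) and $X_2\to 0$ (with $u=(0,-1)^\top$, $w=(0,1)^\top$) have independent reactant vectors, yet $\det(u_j,u_k)=1$. Likewise, $X_1\to X_2$ and $X_2\to X_1+X_2$ give $\det(u_j,u_k)=-1$. So the sum does not vanish term by term, and its vanishing genuinely requires $\gamma\in\mathcal F(\mathcal N)$.

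The correct reason is as follows. Every reactant vector is $e_1=(1,0)^\top$ or $e_2=(0,1)^\top$, so the two columns of $A(\lambda)$ are $p=\sum_{w_j=e_1}\gamma_ju_j$ and $q=\sum_{w_j=e_2}\gamma_ju_j$. Then $p+q=\mathcal N\gamma=\mathbf 0$ gives $\det A(\lambda)=\det(p,-p)=0$. This passes to every subnetwork, since its flux vectors extend by zero. The paper instead checks the two full networks with Lemma~\ref{lem2.1} and invokes the contrapositive of Lemma~\ref{lem:lift}. Your argument for \eqref{eq:2snet1}, where all $\det(w_j,w_k)=0$, is fine.

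\textbf{Necessity.} This direction is the real content, and you leave it as a plan. Moreover, the principle you propose for certifying witnesses is false.

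First, a one-dimensional flux cone does not prevent cancellation. The network $X_1\to X_2$, $X_2\to X_1+X_2$, $X_2\to 0$ is consistent and two-dimensional with flux ray $(1,1,1)^\top$, and its single coefficient is $(-1)+1=0$; that is exactly why it is degenerate. So each witness must be computed explicitly.

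Second, one-ray witnesses need not exist. For $X_1+X_2\rightleftharpoons 0$, $X_1\rightleftharpoons X_2$, every three-reaction subnetwork is inconsistent and the flux cone is two-dimensional. The nondegeneracy $\det A(\lambda)=4\lambda_1\lambda_2$ sits entirely in the cross term.

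Third, a column together with its opposite spans only a one-dimensional network. Which companions you can add depends on the columns the given consistent network actually contains, and that dependence is exactly the case analysis you have not done. In particular, the following are missing:
\begin{itemize}
\item the enumeration of reactant matrices when $B_1$, $B_2$ and $B_3\cup B_4$ all occur (Case~II of Lemma~\ref{lm:b1b2});
\item networks with only axis columns, where neither prototype is preselected and mixing must be excluded (e.g.\ $X_1\to0$, $0\to X_1$, $X_2\to0$, $0\to X_2$ has $\det A(\lambda)=\lambda_1\lambda_2$);
\item two distinct reactions sharing one stoichiometric column (e.g.\ $0\to X_2$ and $X_1\to X_1+X_2$). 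The paper handles this by computing $\det A(\lambda)$ for $E_1$ with a fully parametrized reactant matrix and reading off the forced entries.
\end{itemize}
Until those computations are supplied, the ``only if'' direction is not proved.
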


To prove Theorem~\ref{thm:2s}, we establish the following series of lemmas. Theorem~\ref{thm:2s} is proved at the end of this subsection.

\begin{lemma}
\cite[Lemma 4.3]{JoshiShiu2013}
\label{lem:lift}
 If a subnetwork of  $G$ is nondegenerate, then the network $G$ is also nondegenerate.
\end{lemma}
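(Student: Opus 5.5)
The plan is a perturbation argument: start from a nondegenerate positive steady state of the subnetwork and switch on the removed reactions with small rate constants. Let $G'$ be the nondegenerate subnetwork. Write its reactions as the first $m'$ reactions of $G$ and the removed ones as reactions $m'+1,\ldots,m$. By the definition of subnetwork, $G'$ has the same dimension $r$ as $G$. Every column of $\mathcal{N}'$ is a column of $\mathcal{N}$, so the stoichiometric subspace $S'$ of $G'$ lies in the subspace $S$ of $G$, and equal dimensions force $S'=S$. Hence both networks have the same conservation-law matrix $W$ and the same stoichiometric compatibility classes $\mathcal{P}_c$. By assumption there are $\kappa'\in\mathbb{R}^{m'}_{>0}$ and $x^*\in\mathbb{R}^s_{>0}$ with $f'(\kappa',x^*)=\mathbf{0}$ and $\mathrm{Jac}_{f'}(\kappa',x^*)|_S$ surjective onto $S$.

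Next I would set up a square system. Choose a matrix $U$ whose rows form a basis of $S^*$, so that $u\mapsto Uu$ is an isomorphism $S\to\mathbb{R}^r$. For $\varepsilon\ge 0$, put $\kappa(\varepsilon):=(\kappa',\varepsilon,\ldots,\varepsilon)\in\mathbb{R}^m$ and note that
\[
f(\kappa(\varepsilon),x)=f'(\kappa',x)+\varepsilon\sum_{j>m'}\mathrm{col}_j(\mathcal{N})\,x^{\alpha_{\cdot j}} .
\]
Parametrize the class of $x^*$ by $x=x^*+Bz$ with $z\in\mathbb{R}^r$, where the columns of $B$ form a basis of $S$. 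Define
\[
F(\varepsilon,z):=U\,f(\kappa(\varepsilon),x^*+Bz)\in\mathbb{R}^r .
\]
This makes sense because $f$ takes values in $S$. The map $F$ is polynomial, hence smooth, in $(\varepsilon,z)$, and it is defined for $\varepsilon$ in a neighborhood of $0$, including negative values. We have $F(0,0)=0$, and
\[
D_zF(0,0)=U\,\mathrm{Jac}_{f'}(\kappa',x^*)\,B .
\]
This derivative is invertible exactly because $x^*$ is a nondegenerate steady state of $G'$.

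By the implicit function theorem there is a smooth curve $z(\varepsilon)$ with $z(0)=0$ and $F(\varepsilon,z(\varepsilon))=0$. Since $U$ is injective on $S$, this means $f(\kappa(\varepsilon),x(\varepsilon))=\mathbf{0}$ for $x(\varepsilon):=x^*+Bz(\varepsilon)$. For small $\varepsilon>0$ we get the following:
\begin{itemize}
\item $x(\varepsilon)$ stays positive by continuity, and $\kappa(\varepsilon)\in\mathbb{R}^m_{>0}$, so $G$ is consistent;
\item $D_zF(\varepsilon,z(\varepsilon))=U\,\mathrm{Jac}_f(\kappa(\varepsilon),x(\varepsilon))\,B$ remains invertible by continuity of the determinant, so $\mathrm{Jac}_f(\kappa(\varepsilon),x(\varepsilon))|_S$ is surjective onto $S$.
\end{itemize}
Thus $x(\varepsilon)$ is a nondegenerate positive steady state of $G$, and $G$ is nondegenerate.

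The main point needing care is the equality $S'=S$. This is exactly where the definition of subnetwork (same dimension $r$) is used: it makes the restriction to $S$ in the definition of degeneracy the same for both networks, so the Jacobian condition transfers. Everything else is a routine application of the implicit function theorem.

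As an alternative check, one could argue via Lemma~\ref{lem2.1}. Padding a flux vector of $G'$ with zeros gives an element of $\mathcal{F}(\mathcal{N})$, and at that point the partial transformed Jacobian of $G$ equals that of $G'$. So a principal $r\times r$ minor of $A'(\lambda')$ that is not identically zero yields one of $A(\lambda)$ that is not identically zero. However, this route still needs consistency of $G$ separately, which is why the perturbation argument above is preferable.
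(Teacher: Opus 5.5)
Your argument is correct and is essentially the standard proof. The paper gives no proof of its own and just cites Joshi--Shiu \cite[Lemma~4.3]{JoshiShiu2013}, where the result rests on the same perturbation argument you use: the removed reactions are switched on with small rate constants $\varepsilon$, the equality $S'=S$ forced by the equal-dimension definition of a subnetwork is used, and the implicit function theorem shows that the nondegenerate positive steady state, and invertibility of the Jacobian on $S$, persist for small $\varepsilon>0$.
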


Note that for any two-dimensional zero-one network with two species,  the stoichiometric matrix
\(\mathcal{N}\) must be formed by column vectors from the following sets:
 \begin{align}\label{eq:defb1b2}
    B_1:=\Biggl\{\left(
\begin{array}{c}
-1  \\
-1
\end{array}\right),\left(
\begin{array}{c}
1  \\
1
\end{array}\right)\Biggr\},~B_2:=\Biggl\{\left(
\begin{array}{c}
-1  \\
1
\end{array}\right),\left(
\begin{array}{c}
1  \\
-1
\end{array}\right)\Biggr\},
\end{align}
\begin{align}\label{eq:defb3b4}
    B_3:=\Biggl\{\left(
\begin{array}{c}
-1  \\
0
\end{array}\right),\left(
\begin{array}{c}
1  \\
0
\end{array}\right)\Biggr\},~ B_4:=\Biggl\{\left(
\begin{array}{c}
0  \\
-1
\end{array}\right),\left(
\begin{array}{c}
0  \\
1
\end{array}\right)\Biggr\}.\end{align}

\begin{lemma}
\label{lm:b1b2}
Let $G$ be a two-dimensional zero-one network with two species.  Suppose  \( G \) is consistent. If the stoichiometric matrix $\mathcal{N}$ of  $G$ contains column vectors from both $B_1$ and $B_2$, 
then  $G$ is nondegenerate. 
\end{lemma}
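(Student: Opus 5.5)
The plan is to use the subnetwork–lifting principle of Lemma~\ref{lem:lift}. It suffices to exhibit a consistent, two-dimensional subnetwork of $G$ that is nondegenerate. Nondegeneracy will be certified by Lemma~\ref{lem2.1}: since $s=r=2$, the only index set is $I=\{1,2\}$, so a network is nondegenerate exactly when $\det A(\lambda)$ is not the zero polynomial. By Cauchy--Binet applied to $A=\mathcal N\operatorname{diag}(\gamma)\mathcal X^\top$,
\[
\det A(\gamma)=\sum_{j<k}\gamma_j\gamma_k\,\det\bigl(\mathrm{col}_j(\mathcal N),\mathrm{col}_k(\mathcal N)\bigr)\det\bigl(\mathrm{col}_j(\mathcal X),\mathrm{col}_k(\mathcal X)\bigr).
\]
This reduces everything to showing that this quadratic form in $\gamma$ does not vanish identically on the flux cone of a suitable subnetwork.

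In the zero-one setting, the reactant column is largely determined by the stoichiometric column. An entry $-1$ forces reactant $1$, an entry $1$ forces reactant $0$, and only a $0$ entry leaves the choice $\{0,1\}$ open. Hence columns in $B_1\cup B_2$ have uniquely determined reactant columns:
\[
(-1,-1)^\top\mapsto(1,1)^\top,\quad (1,1)^\top\mapsto(0,0)^\top,\quad (-1,1)^\top\mapsto(1,0)^\top,\quad (1,-1)^\top\mapsto(0,1)^\top.
\]
Only columns in $B_3\cup B_4$ carry a free reactant bit.

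Next I reduce to a small subnetwork. Consistency of $G$ gives a strictly positive $\gamma\in\mathcal F(\mathcal N)$. Hence every column lies in the support of some elementary flux mode, i.e.\ a support-minimal generator of $\mathcal F(\mathcal N)$. In dimension two such a circuit has at most three columns. Pick a circuit $C_1$ containing a column from $B_1$ and a circuit $C_2$ containing a column from $B_2$. The subnetwork on $C_1\cup C_2$ is consistent, since $\ell^{(1)}+\ell^{(2)}$ is positive on it. It is also two-dimensional, since the $B_1$ and $B_2$ columns are linearly independent. This subnetwork has at most six reactions drawn from the eight possible column types, with the free reactant bits for $B_3,B_4$ columns.

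Now enumerate the possible circuits. A circuit is either a reversible pair $\{v,-v\}$, or a triple of vectors from $B_1\cup\dots\cup B_4$ with a positive dependence, for example $(-1,-1)^\top,(1,0)^\top,(0,1)^\top$ or $(1,-1)^\top,(-1,0)^\top,(0,1)^\top$. Up to the symmetries that swap the two species and negate all vectors, only a handful of combined pairs $(C_1,C_2)$ remain.

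For each pair I evaluate $\det A$ at $\gamma=\ell^{(1)}+t\,\ell^{(2)}$, viewed as a polynomial in $t$. The pure $\ell^{(1)}$ and pure $\ell^{(2)}$ parts may vanish, since they can be degenerate one-dimensional pieces. I will therefore aim to show that the coefficient of $t$, which collects the cross terms between columns of $C_1$ and columns of $C_2$, is nonzero. A spot check shows this is plausible. The pair $(-1,-1)^\top$ with reactant $(1,1)^\top$ and $(-1,1)^\top$ with reactant $(1,0)^\top$ contributes $(-2)(-1)=2>0$. The same sign appears for the other $B_1$-versus-$B_2$ pairings.

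The main obstacle is the cross terms involving $B_3/B_4$ columns with a free reactant bit. These could in principle cancel the positive $B_1$--$B_2$ contributions. My first attempt will be to show sign-definiteness: each product $\det(n_j,n_k)\det(x_j,x_k)$ between columns in different circuits is $\ge 0$ for every choice of the free bits. If that fails in some case, I will instead vary the weights inside a three-column circuit only through the circuit vector itself, and check the finitely many bit assignments directly. That is a small computation of at most $2^4$ cases per configuration, and I will verify that in each one the resulting polynomial is nonzero.
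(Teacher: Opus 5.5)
Your overall strategy is sound and differs from the paper's: you lift from a subnetwork, expand $\det A$ by Cauchy--Binet, and restrict to the union of two elementary flux modes through a $B_1$ and a $B_2$ column. However, two of the shortcuts you announce are wrong and must be dropped. Write $D_{jk}:=\det(\mathrm{col}_j(\mathcal N),\mathrm{col}_k(\mathcal N))\det(\mathrm{col}_j(\mathcal X),\mathrm{col}_k(\mathcal X))$.

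First, negating all vectors is not a symmetry, because the reactant column is tied to the signs. The column $(1,1)^\top$ (reactant $(0,0)^\top$) has $D_{jk}=0$ against every column, whereas $(-1,-1)^\top$ has $D_{jk}=2$ against each $B_2$ column. So the $(1,1)^\top$--$B_2$ pairings contribute $0$, not a positive amount, and only the swap of the two species may be used to reduce cases.

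Second, sign-definiteness fails. The column $(1,-1)^\top$ with reactant $(0,1)^\top$ against $(0,1)^\top$ with reactant $(1,0)^\top$ gives $D_{jk}=-1$. So do $(1,0)^\top,(0,1)^\top$ with reactants $(0,1)^\top,(1,0)^\top$.

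You must therefore carry out the full enumeration, for all free bits. The seven circuits through a $B_1$ column are $\{(1,1)^\top,(-1,-1)^\top\}$, $\{(-1,-1)^\top,(1,0)^\top,(0,1)^\top\}$, $\{(1,1)^\top,(-1,0)^\top,(0,-1)^\top\}$, and the four mixed triples with one column from each of $B_1$, $B_2$, $B_3\cup B_4$ (flux $(1,1,2)$). Each must be paired with each of the seven circuits through a $B_2$ column. The coefficient of $t$ is the bilinear form $\sum_{j,k}\ell^{(1)}_j\ell^{(2)}_kD_{jk}$, and $D_{jk}=0$ for parallel columns, so shared columns need no special care. In every case this coefficient turns out to be at least $2$, so the argument does close once done honestly.

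The paper instead splits on whether a $B_3\cup B_4$ column occurs. If not, consistency forces the full four-reaction network, since every three-reaction subnetwork is inconsistent, and $\det A=4\lambda_1\lambda_2$. If so, it exhibits a single three-reaction subnetwork, e.g.\ columns $(-1,-1)^\top,(1,-1)^\top,(0,1)^\top$ with flux $(1,1,2)$. There $\det A=2\lambda^2$ whichever value the free reactant bit takes, and the paper then lifts. This needs only two tiny determinant computations.

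However, the paper asserts without argument that such a triple must be present. That claim genuinely depends on consistency, and your elementary-flux-mode decomposition is precisely the tool that justifies it. Conversely, your route is uniform, needs no case split, and makes the use of consistency explicit, at the price of a larger case table. You can shrink that table by noting that a mixed triple is already nondegenerate on its own. Then only the reversible pairs and the $B_1B_3B_4$ and $B_2B_3B_4$ triples ever need to be paired.
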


\begin{proof}
We prove the conclusion by considering two cases.

{\it Case I.} 
If \(\mathcal{N}\) is formed only by column vectors from $B_1$ and $B_2$, then  
$G$ is a consistent subnetwork of  
 \begin{align}\label{eq:b1b2}
\mathcal{N}=\left(
\begin{array}{cccc}
-1 & 1 & -1 & 1 \\
-1 & 1 & 1 & -1  
\end{array}\right),\;
\mathcal{X}=\left(
\begin{array}{cccc}
1 & 0 & 1 & 0 \\
1 & 0 & 0 & 1  
\end{array}\right).
\end{align} 
Since $G$ is two-dimensional and consistent, ${\mathcal N}$ has at least $3$ columns. However, it is straightforward to check that any $3$-reaction subnetwork of \eqref{eq:b1b2} is inconsistent. So, the network $G$ must be \eqref{eq:b1b2} itself. 
 Notice that the flux cone $\mathcal{F}(\mathcal{N})$ defined in \eqref{eq:Fn}  has the generators $\ell^{(1)} = (1,1,0,0)^{\top}$ and $\ell^{(2)} = (0,0,1,1)^{\top}$. Thus, for any $\gamma\in \mathcal{F}(\mathcal{N})$, we can write  $\gamma=\lambda_1\ell^{(1)}+\lambda_2\ell^{(2)}=(\lambda_1,\lambda_1,\lambda_2,\lambda_2)^\top$, where $\lambda_1,\lambda_2\geq 0$. By the definition of $A(\lambda)$ in \eqref{eq:a}, we have
\begin{align*}
   \det(A(\lambda)) &= \det\left(\mathcal{N}\operatorname{diag}(\gamma)\mathcal{X}^\top \right) \notag \\ 
&=4\lambda_1\lambda_2\not\equiv 0.
\end{align*}
 By Lemma~\ref{lem2.1}, the network \( G \) is nondegenerate.

 {\it Case II.} 
If \(\mathcal{N}\) contains at least one column vector from $B_1$, $B_2$ and $B_3\cup B_4$ respectively, then there exists a submatrix of
$\mathcal{N}$ that is either
\begin{align}\label{eq:b1b2b4net}
\begin{pmatrix}
-1 & 1 & 0 \\
-1 & -1 & 1
\end{pmatrix}\;\text{or}\;
\begin{pmatrix}
-1 & 1 & 0 \\
1 & 1 & -1
\end{pmatrix}
\end{align}
up to a permutation of rows or columns. 
For the first matrix in \eqref{eq:b1b2b4net},  the  corresponding   reactant matrix  can be one of the following two matrices:
\[
\begin{array}{cc}
\begin{pmatrix}
1 & 0 & 0 \\
1 & 1 & 0
\end{pmatrix},
&
\begin{pmatrix}
1 & 0 & 1 \\
1 & 1 & 0
\end{pmatrix}.
\end{array}
\]
In both cases, we can verify by Lemma~\ref{lem2.1} that the corresponding subnetwork is nondegenerate. 
Similarly, for the second matrix in \eqref{eq:b1b2b4net}, we have the same conclusion. So, by Lemma \ref{lem:lift}, the network $G$ is nondegenerate.
\end{proof}

\begin{lemma}
\label{lm:nece}
  Let $G$ be a two-dimensional zero-one network with two species.  If $G$ is degenerate, then $\mathcal{N}$ is a submatrix of one of the following two matrices:
\begin{align}
\left(
\begin{array}{cccccccccc}
-1 & 1&  -1 & -1 & 1 & 1 & 0 & 0 & 0 & 0\\
-1 & 1& 0  & 0 & 0 & 0 & -1 & -1 & 1 & 1 
\end{array}\right),
\label{eq:n2}
\end{align}
\begin{align}
\left(
\begin{array}{cccccccccc}
-1 & 1&  -1 & -1 & 1 & 1 & 0 & 0 & 0 & 0\\
1 & -1& 0  & 0 & 0 & 0 & -1 & -1 & 1 & 1 
\end{array}\right)
\label{eq:n3}
\end{align}
up to a permutation of rows or columns. 
\end{lemma}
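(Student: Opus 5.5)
Since $G$ is degenerate, it is in particular consistent (by definition) and nondegenerate subnetworks are excluded by Lemma~\ref{lem:lift}. The columns of $\mathcal{N}$ come from $B_1\cup B_2\cup B_3\cup B_4$. By Lemma~\ref{lm:b1b2}, $\mathcal{N}$ cannot contain columns from both $B_1$ and $B_2$. So either all columns lie in $B_1\cup B_3\cup B_4$, or all lie in $B_2\cup B_3\cup B_4$.

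In the first case, the distinct column vectors available are $(-1,-1)^\top,(1,1)^\top,(\pm1,0)^\top,(0,\pm1)^\top$. Several reactions may share the same stoichiometric column but differ in reactant vector. A zero-one reaction with column $(-1,0)^\top$ must consume $X_1$ and may or may not carry $X_2$ on both sides, giving two possible reactions. The same holds for $(1,0)^\top$, $(0,-1)^\top$ and $(0,1)^\top$. By contrast, the columns $(-1,-1)^\top$ and $(1,1)^\top$ each admit exactly one zero-one reaction, namely $X_1+X_2\to 0$ and $0\to X_1+X_2$. Hence the columns of $\mathcal{N}$, with multiplicity counted by distinct reactions, form a submatrix of \eqref{eq:n2}: one copy each of the $B_1$ columns, and two copies each of the four $B_3\cup B_4$ columns.

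The second case is handled the same way. The columns $(-1,1)^\top$ and $(1,-1)^\top$ correspond only to $X_1\to X_2$ and $X_2\to X_1$, so $\mathcal{N}$ is a submatrix of \eqref{eq:n3}. Up to a permutation of columns, this is the claim. Since reactions in a network are distinct and $\mathcal{N}$ has at most one column per reaction, the multiplicity bound follows directly from counting the zero-one reactions for each column.

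The main subtle point is the case where $\mathcal{N}$ has no $B_1$ or $B_2$ column at all, i.e. only columns from $B_3\cup B_4$. Such an $\mathcal{N}$ is a submatrix of both \eqref{eq:n2} and \eqref{eq:n3}, so the statement still holds and no extra argument is needed. I also need to justify that no reaction has a column with an entry of absolute value $2$. This holds because coefficients are $0$ or $1$, so each difference $\beta_{ij}-\alpha_{ij}$ lies in $\{-1,0,1\}$. Nonzero columns are guaranteed because each reaction changes its complex. Beyond these checks, the lemma is essentially a counting consequence of Lemma~\ref{lm:b1b2}, with row permutation allowed only to fix the order of the species.
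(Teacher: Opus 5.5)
Your proof is correct and takes the same route as the paper. You apply Lemma~\ref{lm:b1b2}, which is applicable because degeneracy entails consistency, to rule out columns from both $B_1$ and $B_2$, and then read off the two cases. Your explicit count of zero-one reactions per stoichiometric column (one for each $B_1$ or $B_2$ column, two for each $B_3$ or $B_4$ column) justifies the column multiplicities in \eqref{eq:n2} and \eqref{eq:n3}; the paper leaves this step implicit.
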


\begin{proof}
By Lemma \ref{lm:b1b2}, 
if column vectors from \(B_1\) are selected but none from \(B_2\), while vectors from sets \(B_3\) and \(B_4\) can be selected, then \(\mathcal{N}\) is a submatrix of \eqref{eq:n2}. If column vectors from \(B_1\) are not selected but those from \(B_2\) are, and vectors from sets \(B_3\) and \(B_4\) can again be selected, then \(\mathcal{N}\) is a submatrix of \eqref{eq:n3}. 
\end{proof}

\begin{lemma}
\label{lm:2cases}
Let $G$ be a two-dimensional zero-one network with two species.
\begin{itemize}
\item[(I)] If the stoichiometric matrix \(\mathcal{N}\) is a submatrix of \eqref{eq:n2},  then the network \(G\) is degenerate if and only if it is a consistent subnetwork of \eqref{eq:2snet1}.
\item[(II)] If the stoichiometric matrix \(\mathcal{N}\) is a submatrix of \eqref{eq:n3},  then the network \(G\) is degenerate if and only if it is a consistent subnetwork of \eqref{eq:2snet2}.
\end{itemize}
\end{lemma}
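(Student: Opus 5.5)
The plan is to treat both parts in parallel: first make the reactant matrices explicit, then prove sufficiency by a rank-one argument, and finally prove necessity by exhibiting nondegenerate consistent subnetworks together with Lemma~\ref{lem:lift}.

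\textbf{Step 1: enumerate the reactant columns.} Because $G$ is zero-one, each column of $\mathcal N$ restricts its reactant column. For $\mathcal N$ a submatrix of \eqref{eq:n2}, the possibilities are:
\begin{itemize}
\item $(-1,-1)^\top$ forces reactant $(1,1)^\top$;
\item $(1,1)^\top$ forces reactant $(0,0)^\top$;
\item $(-1,0)^\top$ allows reactant $(1,0)^\top$ or $(1,1)^\top$;
\item $(1,0)^\top$ allows reactant $(0,0)^\top$ or $(0,1)^\top$;
\item the columns of $B_4$ behave symmetrically.
\end{itemize}
This accounts for the repeated columns in \eqref{eq:n2}. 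So $G$ is a subnetwork of a fixed $10$-reaction network. The six reactions whose reactant column is $(1,1)^\top$ or $(0,0)^\top$ form exactly \eqref{eq:2snet1}. The other four are the ``bad'' reactions, whose reactant has unequal entries. The same enumeration for \eqref{eq:n3} singles out \eqref{eq:2snet2}: there the good reactions have reactant $(1,0)^\top$ or $(0,1)^\top$, and the bad ones have reactant $(1,1)^\top$ or $(0,0)^\top$.

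\textbf{Step 2: sufficiency.} Let $G$ be a consistent subnetwork of \eqref{eq:2snet1}. Every reactant column then has the form $(x_j,x_j)^\top$, so
\[
A(\lambda)=\sum_j\gamma_j\,\mathrm{col}_j(\mathcal N)\,x_j(1,1)=u\,(1,1),\qquad u:=\sum_j\gamma_jx_j\,\mathrm{col}_j(\mathcal N).
\]
Hence $\operatorname{rank}A(\lambda)\le1$ and $\det A(\lambda)\equiv0$. Since consistency holds by assumption, Lemma~\ref{lem2.1} gives degeneracy.

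For \eqref{eq:2snet2}, every reactant column has the form $(x_j,1-x_j)^\top$. Then
\[
A(\lambda)=\Big(\sum_j\gamma_jx_j\,\mathrm{col}_j(\mathcal N)\Big)(1,-1)+\Big(\sum_j\gamma_j\,\mathrm{col}_j(\mathcal N)\Big)(0,1).
\]
The second term vanishes because $\mathcal N\gamma=\mathbf 0$. So again $\operatorname{rank}A\le1$, and $G$ is degenerate.

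\textbf{Step 3: necessity.} Suppose $G$ is degenerate, so in particular consistent, and contains a bad reaction; I want a contradiction. By Lemma~\ref{lem:lift} it suffices to find a nondegenerate consistent subnetwork. Using Cauchy--Binet,
\[
\det A(\lambda)=\sum_{j<k}\gamma_j\gamma_k\det[\mathrm{col}_j(\mathcal N),\mathrm{col}_k(\mathcal N)]\,\det[x_j,x_k].
\]
For part (I), a bad reaction, say with $\mathrm{col}_j(\mathcal N)=(-1,0)^\top$ and $x_j=(1,0)^\top$, must lie in the support of some extreme flux vector $\gamma$ of $\mathcal F(\mathcal N)$, since $G$ is consistent. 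That support is a small consistent subnetwork; its columns positively span a line (two opposite columns) or the plane (three or four columns).

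I would enumerate these minimal supports containing the bad reaction. The first cases to check are:
\begin{itemize}
\item $\{(-1,0),(1,0),(0,\pm1)\text{-pairs}\}$;
\item $\{(-1,0),(1,1),(0,-1)\}$;
\item $\{(-1,-1),(1,0),(0,1)\}$ with the bad reactant.
\end{itemize}
In each case I would check directly that the sum above has a nonzero coefficient for some monomial $\lambda_a\lambda_b$. This amounts to showing that the bad reactant breaks the rank-one factorisation of Step 2, because $x_j$ is no longer proportional to $(1,1)$. Part (II) is handled identically, with $(1,-1)$ in place of $(1,1)$.

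\textbf{Main obstacle.} The main obstacle is Step 3. A single extreme flux vector may give only one $\lambda_a$, and then $\det A$ can vanish on that ray alone. So I will need to pair two flux generators (for instance a two-cycle together with a three-reaction cycle through the bad reaction) to obtain a cross term $\lambda_a\lambda_b$ whose coefficient is nonzero. The delicate point is checking that no cancellation occurs among the finitely many configurations.
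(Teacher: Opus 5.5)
Steps 1 and 2 are sound. Your rank-one factorisations $A(\lambda)=u\,(1,1)$ and $A(\lambda)=u\,(1,-1)$ (the latter using $\mathcal N\gamma=\mathbf 0$) prove sufficiency more cleanly than the paper, which checks \eqref{eq:2snet1} directly and then invokes Lemma~\ref{lem:lift}.

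The gap is in Step~3. The implication you set out to prove there, ``degenerate and containing a bad reaction $\Rightarrow$ contradiction'', is false when $\mathcal N$ has no column from $B_1$ (resp.\ $B_2$).

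\textbf{Counterexample.} In part (I), take the network $X_1\to 0$, $X_2\to X_1+X_2$, $X_2\to 0$, $X_1\to X_1+X_2$.
\begin{itemize}
\item Its stoichiometric matrix $\begin{pmatrix}-1&1&0&0\\0&0&-1&1\end{pmatrix}$ is a submatrix of \eqref{eq:n2}.
\item All four reactions are bad in your sense.
\item The flux cone gives $\gamma=(\lambda_1,\lambda_1,\lambda_2,\lambda_2)^\top$ and $A(\lambda)=\begin{pmatrix}-\lambda_1&\lambda_1\\ \lambda_2&-\lambda_2\end{pmatrix}$.
\item Your Cauchy--Binet sum is therefore $\gamma_1\gamma_3-\gamma_2\gamma_4\equiv 0$.
\end{itemize}
Up to reordering reactions, this network is \eqref{eq:b3b4net2}, a consistent subnetwork of \eqref{eq:2snet2}. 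So the second half of your own Step~2 proves it degenerate. Yet it is not isomorphic to a subnetwork of \eqref{eq:2snet1}, whose reactants are all $0$ or $X_1+X_2$. Symmetrically, \eqref{eq:b3b4net1} is degenerate and has stoichiometric matrix inside \eqref{eq:n3}, but is not a subnetwork of \eqref{eq:2snet2}.

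So the cancellation you flag as the main obstacle really occurs, in the first configuration on your list, and no case analysis can remove it. The lemma as literally stated fails on the overlap of \eqref{eq:n2} and \eqref{eq:n3}. The paper's own argument does not address this case either: it only examines stoichiometric matrices with at least six columns, reducing them to the five-column patterns $E_1,E_2$ with symbolic reactant entries. The defect is in the statement, not specifically in your method.

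\textbf{Suggested repair.}
\begin{itemize}
\item In (I), require that $\mathcal N$ contain a column of $B_1$; in (II), a column of $B_2$. This matches how Lemma~\ref{lm:nece} produces the two cases.
\item Treat $\mathcal N$ built from $B_3\cup B_4$ alone as a separate case. There $\mathcal F(\mathcal N)$ splits into the $X_1$- and $X_2$-exchanges. $A(\lambda)$ has diagonal entries $-P,-Q$, where $P,Q$ are the fluxes through the two exchanges. Its off-diagonal entries have absolute value at most $P$ and $Q$ respectively, so $\det A(\lambda)=PQ-uv\ge 0$.
\item At a positive flux vector, $\det A(\lambda)\equiv 0$ forces each exchange to consist only of good reactions or only of bad ones, both of the same kind. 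This gives exactly \eqref{eq:b3b4net1} and \eqref{eq:b3b4net2}.
\item Theorem~\ref{thm:2s} is unaffected, since its conclusion is the union of the cases.
\end{itemize}

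With a $B_1$ column present, your reaction-level good/bad bookkeeping works and is more transparent than the paper's column-pattern reduction. For instance, on a three-reaction cycle $\{(-1,-1),(1,0),(0,1)\}$ or $\{(1,1),(-1,0),(0,-1)\}$ passing through a bad reaction, one gets $\det A(\lambda)=\lambda^2\neq 0$. The configurations where the bad reaction lies only in a two-cycle still have to be checked by pairing with a second generator. As written, Step~3 is a plan rather than a proof.
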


\begin{proof}

{\it (I)} $\Longleftarrow$ It is straightforward to check by Lemma \ref{lem2.1} that the network \eqref{eq:2snet1} is degenerate. So, by Lemma \ref{lem:lift}, any consistent subnetwork of \eqref{eq:2snet1} is also degenerate. 

$\Longrightarrow$ 
Notice that the network \eqref{eq:2snet1} is a $6$-reaction subnetwork of some network corresponding to the matrix \eqref{eq:n2}. By Lemma \ref{lm:nece}, we only need to show that any other submatrix of  \eqref{eq:n2} with $6$ columns and any submatrix of  \eqref{eq:n2} with more than $6$ columns must correspond to nondegenerate networks. Notice that any $6$-column submatrix of \eqref{eq:n2} different from
$\mathcal N$ stated in \eqref{eq:2snet1}, or any $m$-column submatrix
($m>6$) of \eqref{eq:n2}, must contain a $5$-column submatrix that is
either $E_1$ or $E_2$, where
\begin{align*}
E_1=\left(
\begin{array}{ccccc}
-1 & 1 & 0 & 0 & 0\\
0 & 0 & -1 & 1 & 1 
\end{array}\right)~\mbox{or}~E_2=\left(
\begin{array}{ccccc}
-1 & 1 & 0 & 0 & 0\\
0 & 0 & 1 & -1 & -1 
\end{array}\right).
\label{eq:l11}
\end{align*}
Next, we show that any network corresponding to $E_1$ or $E_2$ is
nondegenerate. The conclusion then follows directly from
Lemma~\ref{lem:lift}. Notice that the flux cone $\mathcal{F}(E_1)$ has generators $\ell^{(1)} = (1,1,0,0,0)^{\top}$, $\ell^{(2)} = (0,0,1,1,0)^{\top}$, $\ell^{(3)} = (0,0,1,0,1)^{\top}$. Thus, for $\gamma\in \mathcal{F}(E_1)$, we can write $\gamma=\sum^3_{i=1}\lambda_i\ell^{(i)}=(\lambda_1,\lambda_1,\lambda_2+\lambda_3,\lambda_2,\lambda_3)^\top$, where $\lambda_i\geq 0$.
Note that the reactant matrix $\mathcal{X}_1$ corresponding to $E_1$ can be written as
\begin{align*}
 \begin{pmatrix}
1 & 0 & \theta_1 & \theta_2 & \theta_3 \\
\omega_1 & \omega_2 & 1 & 0 & 0 
\end{pmatrix},
\label{eq:l11b}
\end{align*}
where $\theta_1, \theta_2, \theta_3, \omega_1, \omega_2 \in \{0,1\}$. By the definition of $A(\lambda)$ in \eqref{eq:a}, we have
\begin{align*}
  \det(A(\lambda)) &= \det\left(E_1\operatorname{diag}(\gamma)\mathcal{X}_1^\top \right) \notag \\ 
&=(1-(\theta_2-\theta_1)(\omega_2-\omega_1))\lambda_1\lambda_2+(1-(\theta_3-\theta_1)(\omega_2-\omega_1))\lambda_1\lambda_3.
\end{align*}
Hence, $\det(A(\lambda))\equiv 0$ if and only if $(\theta_2-\theta_1)(\omega_2-\omega_1)=1$ and $(\theta_3-\theta_1)(\omega_2-\omega_1)=1$. This leads directly to two possible cases:
\begin{align*}
\left\{
\begin{aligned}
\theta_1 &= \omega_1 = 1 \\
\theta_2 &= \theta_3 = \omega_2 = 0
\end{aligned}
\right.
\quad \text{or} \quad
\left\{
\begin{aligned}
\theta_1 &= \omega_1 = 0 \\
\theta_2 &= \theta_3 = \omega_2 = 1
\end{aligned}
\right..
\end{align*}
Consequently, if the network $(E_1, {\mathcal X}_1)$ is degenerate, then the reactant matrix  $\mathcal{X}_1$  can be
\begin{align*}
\begin{pmatrix} 1 & 0 & 1 & 0 & 0 \\ 1 & 0 & 1 & 0 & 0 \end{pmatrix}~
\text{or}~
\begin{pmatrix} 1 & 0 & 0 & 1 & 1 \\ 0 & 1 & 1 & 0 & 0 \end{pmatrix}.
\end{align*}
In fact, neither of these two cases can occur. Otherwise, the fourth and fifth columns of $E_1$ would correspond to identical reactions, which contradicts the network structure.  Then, any network corresponding to  $E_1$ is nondegenerate. Similarly, it can be shown that any network corresponding to $E_2$ is also nondegenerate.


 {\it (II)} The proof is similar to that of {\it (I)}.
\end{proof}

\subsection{Networks with \texorpdfstring{$s$}{s} species
\texorpdfstring{$(s\geq 3)$}{(s >= 3)}}
\label{sec4.2}

\begin{lemma}
\label{lm:cl}
    Let $G$ be a two-dimensional zero-one network with $s$ species.   Suppose $G$ is degenerate.   
Assume that the conservation laws are given by $\mathcal{N}_i = a_i\mathcal{N}_1 + b_i\mathcal{N}_2$ for 
$i\in \{3, \ldots, s\}$, where \( a_i,b_i \in \mathbb{R} \).
Let ${\mathcal N}^{1,2}:=\begin{pmatrix} {\mathcal N}_1\\{\mathcal N}_2
\end{pmatrix}
$.
\begin{itemize}
\item[(I)] If 
 ${\mathcal N}^{1,2}$ contains  column vectors from exactly three sets of  
    $B_1$, $B_2$, $B_3$ and $B_4$, 
then $(|a_i|, |b_i|) \in \{(1, 0), (0, 1), (0, 0)\}$. 
\item[(II)] If ${\mathcal N}^{1,2}$ contains column vectors from exactly two sets of  $B_1$, $B_2$, $B_3$ and $B_4$, then 
$(|a_i|, |b_i|) \in \{(1, 0), (0, 1), (0, 0), (1,1)\}$.
\end{itemize}
\end{lemma}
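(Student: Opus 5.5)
We exploit the fact that $G$ is a zero-one network, so every entry of every row $\mathcal{N}_i$ lies in $\{-1,0,1\}$. Before using this, we need $\mathcal{N}^{1,2}$ to be two-dimensional, so that species $1,2$ span the row space. We also need a reduced network to which Theorem~\ref{thm:2s} applies. For each $i\in\{3,\ldots,s\}$ the coefficients $(a_i,b_i)$ are then uniquely determined. Since every reaction column $(n_1,n_2)^\top$ of $\mathcal{N}^{1,2}$ lies in $B_1\cup B_2\cup B_3\cup B_4$ (columns with $n_1=n_2=0$ cannot occur, by the rank condition together with $\mathcal N_i=a_i\mathcal N_1+b_i\mathcal N_2$, unless the column is zero), each such column forces the constraint
\[
a_in_1+b_in_2\in\{-1,0,1\}.
\]
Concretely, $B_3$ forces $|a_i|\le 1$ with $a_i\in\{-1,0,1\}$, and $B_4$ forces $b_i\in\{-1,0,1\}$. $B_1$ forces $a_i+b_i\in\{-1,0,1\}$, and $B_2$ forces $a_i-b_i\in\{-1,0,1\}$.

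The plan is a case analysis over which sets of $B_1,\ldots,B_4$ occur, using two sources of restriction.

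The first source is the integrality constraints above. If two of them come from linearly independent column types, they force $a_i,b_i$ into a finite set. For instance, $B_3$ and $B_4$ together give $a_i,b_i\in\{-1,0,1\}$. Adding $B_1$ excludes $(1,1)$ and $(-1,-1)$ but leaves $(1,-1)$ and $(-1,1)$. Adding $B_2$ excludes the latter pair instead.

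The second source is degeneracy. By Lemma~\ref{lem2.2}, the two-species network on species $\{1,2\}$ (which remains two-dimensional) is degenerate. So by Lemma~\ref{lm:b1b2}, $B_1$ and $B_2$ do not both occur. More sharply, by Theorem~\ref{thm:2s} (via Lemma~\ref{lm:2cases}), $\mathcal{N}^{1,2}$ is a submatrix of \eqref{eq:2snet1} or \eqref{eq:2snet2}, so it contains exactly three of the column types. Hence in case (I) the only possibilities are $B_1\cup B_3\cup B_4$ or $B_2\cup B_3\cup B_4$. In case (II) the possibilities are the two-set combinations appearing among these.

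Next we also apply Lemma~\ref{lem2.2} to the pairs $\{1,i\}$ and $\{2,i\}$, and to $\{j,i\}$ generally, whenever those pairs are two-dimensional. Each pair network must again be a consistent subnetwork of \eqref{eq:2snet1} or \eqref{eq:2snet2}. This rules out the leftover values. In case (I) with $B_1,B_3,B_4$, the candidate $(a_i,b_i)=(1,-1)$ gives row $\mathcal{N}_i=\mathcal{N}_1-\mathcal{N}_2$. The pair $\{1,i\}$ then sees the columns $(-1,0)$ (from $B_1$), $(\pm1,\pm1)$ (from $B_3$) and $(0,\pm1)$ (from $B_4$). This produces both $B_1$- and $B_2$-type columns, or too many types, contradicting Lemma~\ref{lm:b1b2}. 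The symmetric case with $B_2$ is handled the same way. This yields $(|a_i|,|b_i|)\in\{(1,0),(0,1),(0,0)\}$.

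For case (II) we run through the two-set combinations: $\{B_1,B_3\}$, $\{B_1,B_4\}$, $\{B_3,B_4\}$, and the analogues with $B_2$. In each, the single-type constraints fix one coefficient to $\{-1,0,1\}$ and bound a sum or difference. Combining this with the pairwise degeneracy test, and with the requirement that entries be in $\{-1,0,1\}$ for the columns actually present, should leave only $(|a_i|,|b_i|)\in\{(1,0),(0,1),(0,0),(1,1)\}$. The value $(1,1)$ survives here because, e.g., with only $B_3,B_4$ columns, $\mathcal N_1\pm\mathcal N_2$ is still a zero-one-compatible row.

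The main obstacle is case (II). There the two-set combinations such as $\{B_1,B_3\}$ only give one integrality constraint on a single coefficient plus one on a combination. A priori, non-integer values are then allowed; for example $a_i$ and $b_i$ could be arbitrary reals subject only to $a_i+b_i\in\{-1,0,1\}$ if only $B_1$ appeared. My first attempt would be to use that $\mathcal{N}^{1,2}$ is two-dimensional, so at least two column types must be present, which pins down $a_i,b_i$ via a $2\times2$ integer system. Then I would exclude values like $(2,-1)$ by the entry bound. The remaining ones, such as $(1,-1)$ against $B_1$, would be excluded by the pairwise Lemma~\ref{lem2.2} argument above.
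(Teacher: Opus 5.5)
\textbf{There is a genuine gap in case (I).} Your exclusion of $(a_i,b_i)=(1,-1)$ when $\mathcal N^{1,2}$ has columns of types $B_1,B_3,B_4$ does not work. Take columns $(\varepsilon,\varepsilon)^\top$, $(\varepsilon',0)^\top$, $(0,\varepsilon'')^\top$ and $\mathcal N_i=\mathcal N_1-\mathcal N_2$. Then:
\begin{itemize}
\item the pair $\{1,i\}$ sees $(\varepsilon,0)^\top$, $(\varepsilon',\varepsilon')^\top$, $(0,-\varepsilon'')^\top$, i.e.\ types $B_3,B_1,B_4$;
\item the pair $\{2,i\}$ sees $(\varepsilon,0)^\top$, $(0,\varepsilon')^\top$, $(\varepsilon'',-\varepsilon'')^\top$, i.e.\ types $B_3,B_4,B_2$.
\end{itemize}
Your ``$(\pm1,\pm1)$'' hides that the two signs agree. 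Neither pair mixes $B_1$ with $B_2$, and three column types is exactly what \eqref{eq:2snet1} and \eqref{eq:2snet2} themselves contain, so Lemma~\ref{lm:b1b2} gives nothing. In fact no argument using only the stoichiometric columns can succeed here, since each pair on its own is compatible with one of the two prototypes.

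The missing idea is the reactant-matrix content of Theorem~\ref{thm:2s}. A degenerate two-species network with a $B_1$ column is a subnetwork of \eqref{eq:2snet1}, so its two rows of $\mathcal X$ coincide. One with a $B_2$ column is a subnetwork of \eqref{eq:2snet2}, so its rows of $\mathcal X$ are complementary. Applying this to the three pairs $\{1,2\}$, $\{1,i\}$, $\{2,i\}$ gives $\mathcal X_1=\mathcal X_2$, $\mathcal X_1=\mathcal X_i$ and $\mathcal X_2=\bar{\mathcal X}_i$, which is contradictory. This is how the paper rules out $(-1,1)$. The same argument handles $(1,-1)$ by symmetry, and $(\pm1,\pm1)$ when the mixed column comes from $B_2$.

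\textbf{Your plan for case (II) misidentifies what must be excluded.}
\begin{itemize}
\item For $\{B_1,B_3\}$ the entry bound gives $a_i\in\{-1,0,1\}$ and $a_i+b_i\in\{-1,0,1\}$. The bad survivors are $(1,-2)$ and $(-1,2)$; for $\{B_1,B_4\}$ they are $(\pm2,\mp1)$, and analogously with $B_2$.
\item These values produce only entries in $\{-1,0,1\}$ on the columns actually present, so the entry bound cannot remove them, contrary to your plan.
\item They are removed by Lemma~\ref{lem2.2} together with Lemma~\ref{lm:b1b2}. For example, for $(1,-2)$ the pair $\{1,i\}$ contains both $(\varepsilon,-\varepsilon)^\top$ and $(\varepsilon',\varepsilon')^\top$.
\item Conversely, you must not try to exclude $(1,-1)$ against $B_1$ in case (II). It lies in the allowed set $(|a_i|,|b_i|)=(1,1)$ and is realized. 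In the degenerate network \eqref{eq:mainnet2}, order the species so that $\mathcal N_1=\mathcal E_3$ and $\mathcal N_2=\mathcal E_1$ (columns of types $B_1,B_3$). Then $\mathcal N_i=\mathcal E_2=\mathcal N_1-\mathcal N_2$.
\end{itemize}
Finally, Theorem~\ref{thm:2s} only implies that $\mathcal N^{1,2}$ contains at most three column types, not exactly three.
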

\begin{proof}
{\it (I)} Suppose 
 ${\mathcal N}^{1,2}$ contains  column vectors from exactly three sets of  
    $B_1$, $B_2$, $B_3$ and $B_4$.   Notice that by Lemma \ref{lem2.2}, since $G$ is degenerate, the $2$-species network 
$(\mathcal{N}^{1,2}, \mathcal{X}^{1,2})$ is also degenerate.
 So, by Lemma \ref{lm:b1b2}, the column vectors in $\mathcal{N}^{1,2}$ cannot be from both  $B_1$ and $B_2$. 
Hence, the submatrix 
${\mathcal N}^{1,2}$ contains at least one column vector respectively from $B_1\cup B_2$,
     $B_3$ and $B_4$.
     By \eqref{eq:defb1b2}--\eqref{eq:defb3b4}, 
there exist $j,k,\ell$
such that
\begin{align*}
\operatorname{col}_j({\mathcal N}^{1,2})
=(\varepsilon_j,\varepsilon_j)^\top\;\text{or}\;(\varepsilon_j,-\varepsilon_j)^\top,~ 
\operatorname{col}_k({\mathcal N}^{1,2})
=(\varepsilon_k,0)^\top,~ 
\operatorname{col}_{\ell}({\mathcal N}^{1,2})
=(0,\varepsilon_{\ell})^\top,
\end{align*}
where $\varepsilon_j,\varepsilon_k,\varepsilon_{\ell}\in\{-1,1\}$. 
Since ${\mathcal N}_i=a_i\mathcal{N}_1 + b_i\mathcal{N}_2$, we have
${\mathcal N}_{ik}=a_i\varepsilon_k$ and ${\mathcal N}_{i\ell}=b_i\varepsilon_{\ell}$. Notice that all entries of ${\mathcal N}$ belong to $\{-1, 0, 1\}$. So, we have 
 $|a_i|,  |b_i|\in \{0, 1\}$ and hence $(|a_i|, |b_i|) \in \{(1, 0), (0, 1), (0, 0), (1,1)\}$. Below, we show that $(|a_i|, |b_i|)$ cannot be 
$(1,1)$. 

If
\(
\operatorname{col}_j(\mathcal N^{1,2})
=
(\varepsilon_j,\varepsilon_j)^\top
\in B_1,
\)
then
\(
\mathcal N_{ij}=(a_i+b_i)\varepsilon_j.
\)
Since \(\mathcal N_{ij}\in\{-1,0,1\}\), we have
\(
(a_i,b_i)\notin\{(1,1),(-1,-1)\}.
\) Next, we prove that if $G$ is degenerate, then $(a_i, b_i)\not\in \{(-1,1), (1,-1)\}$. We first assume that $(a_i,b_i)=(-1,1)$ and derive a contradiction.
The case $(a_i,b_i)=(1,-1)$ then follows by swapping  the species
$X_1$ and $X_2$ in $G$. Consider the $3$-species network $G^{1,2,i}$ given by
$$
{\mathcal N}^{1,2,i}:=\begin{pmatrix}
{\mathcal N}_1\\
{\mathcal N}_2\\
{\mathcal N}_i\\
\end{pmatrix}, \;\;
{\mathcal X}^{1,2,i}:=\begin{pmatrix}
{\mathcal X}_1\\
{\mathcal X}_2\\
{\mathcal X}_i\\
\end{pmatrix}.$$
 Since in this case, we have \(\operatorname{col}_j({\mathcal N}^{1,2})
=(\varepsilon_j,\varepsilon_j)^\top\), 
by Theorem~\ref{thm:2s}, we have
\(
\mathcal{X}_1=\mathcal{X}_2.\) As
\(
\mathcal{N}_i
=-\mathcal{N}_1+\mathcal{N}_2,
\)
we obtain
\begin{align}\label{eq:n12i}
\operatorname{col}_k({\mathcal N}^{1,2,i})
=(\varepsilon_k,0,-\varepsilon_k)^\top,~\text{and}~
\operatorname{col}_{\ell}({\mathcal N}^{1,2,i})
=(0,\varepsilon_{\ell},\varepsilon_{\ell})^\top.
\end{align}
If $G$ is degenerate, then
by Lemma~\ref{lem2.2}, the $2$-species network 
$(\mathcal{N}^{2,i}, \mathcal{X}^{2,i})$ is degenerate.
By  \eqref{eq:n12i}, we have $\operatorname{col}_{\ell}({\mathcal N}^{2,i})
=(\varepsilon_{\ell},\varepsilon_{\ell})^\top$. So, by Theorem~\ref{thm:2s}, we have  \(
\mathcal{X}_2=\mathcal{X}_i.
\)
Hence, all three rows in ${\mathcal X}^{1,2,i}$ are the same.  Notice that the $2$-species network 
$(\mathcal{N}^{1,i}, \mathcal{X}^{1,i})$ is also degenerate.
By \eqref{eq:n12i}, we have \(\operatorname{col}_k(\mathcal{N}^{1,i})
=(\varepsilon_k,-\varepsilon_k)^\top\). So, by Theorem~\ref{thm:2s}, we have  
\(
\mathcal{X}_1=
\bar{\mathcal{X}}_i
\), 
which contradicts 
$\mathcal{X}_1=\mathcal{X}_i$.
Therefore,
\(
(a_i,b_i)\neq(-1,1).
\)

If
\(
\operatorname{col}_j(\mathcal N^{1,2})
=
(\varepsilon_j,-\varepsilon_j)^\top
\in B_2,
\)
then
\(
\mathcal N_{ij}=(a_i-b_i)\varepsilon_j.
\)
Since \(\mathcal N_{ij}\in\{-1,0,1\}\), we have
\(
(a_i,b_i)\notin\{(1,-1),(-1,1)\}.
\) Similar to the previous paragraph, we can prove that if $G$ is degenerate, then $(a_i, b_i)\not\in \{(1,1), (-1,-1)\}$.

 {\it (II)} 
 Suppose ${\mathcal N}^{1,2}$ contains column vectors from exactly two sets of  $B_1$, $B_2$, $B_3$ and $B_4$. 
 Recall that by Lemma \ref{lem2.2} and  Lemma \ref{lm:b1b2}, the column vectors in $\mathcal{N}^{1,2}$ cannot be from both  $B_1$ and $B_2$.  Also, notice that if the column vectors in $\mathcal{N}^{1,2}$ are from $B_3$ and $B_4$, then
 similar to the first paragraph of the  proof of {\it (I)}, we can directly show that $(|a_i|, |b_i|) \in \{(1, 0), (0, 1), (0, 0), (1,1)\}$. 

 We only need to consider the cases where the column vectors in
$\mathcal{N}^{1,2}$ are from $B_1$ and $B_3$, or from $B_2$ and $B_3$. If they are from 
 $B_1$ and $B_4$, or from $B_2$ and $B_4$, then we can 
 show that the conclusion holds by simply swapping the two species $X_1$ and $X_2$ in $G$.  
 Notice that $G$ is consistent. If $\mathcal{N}^{1,2}$ contains column vectors from 
 $B_1$ and $B_3$, then 
 by \eqref{eq:defb1b2}--\eqref{eq:defb3b4}, 
there exist $j,k,\ell$
such that
\begin{align*}
\operatorname{col}_j({\mathcal N}^{1,2})
=(1,1)^\top,~ 
\operatorname{col}_k({\mathcal N}^{1,2})
=(-1,-1)^\top,~ 
\operatorname{col}_{\ell}({\mathcal N}^{1,2})
=(\varepsilon_{\ell}, 0)^\top,
\end{align*}
where $\varepsilon_{\ell}\in\{-1,1\}$. 
Since ${\mathcal N}_i=a_i\mathcal{N}_1 + b_i\mathcal{N}_2$, we have ${\mathcal N}_{ij}=a_i+b_i$,
${\mathcal N}_{ik}=-a_i-b_i$ and ${\mathcal N}_{i\ell}=a_i\varepsilon_{\ell}$. Notice that all entries of ${\mathcal N}$ belong to $\{-1, 0, 1\}$. So, we have 
 $(|a_i|, |b_i|) \in \{(1, 0), (0, 1), (0, 0), (1,1)\}$ or $(a_i, b_i) \in \{(1, -2), (-1, 2)\}$. Below, we show that if $G$ is degenerate, then $(a_i, b_i) \not\in \{(1, -2), (-1, 2)\}$. If 
 $(a_i, b_i)=(1, -2)$, then ${\mathcal N}^{1,i}$ contains columns $(1, -1)^\top$ and $(\varepsilon_{\ell}, \varepsilon_{\ell})$. If $(a_i, b_i)=(-1, 2)$, then ${\mathcal N}^{1,i}$ contains columns $(1, 1)^\top$ and $(\varepsilon_{\ell}, -\varepsilon_{\ell})$. In both cases, by Lemma \ref{lm:b1b2},  the $2$-species network 
$(\mathcal N^{1,i}, \mathcal X^{1,i})$ is nondegenerate. Hence, by Lemma \ref{lem2.2}, the network $G$ is also nondegenerate, which is a contradiction. So, we have $(|a_i|, |b_i|) \in \{(1, 0), (0, 1), (0, 0), (1,1)\}$.  Similarly, the conclusion holds if $\mathcal{N}^{1,2}$ contains column vectors from 
 $B_2$ and $B_3$. 
 \end{proof}

\begin{proof}[Proof of Theorem~\ref{thm:deg-char}]
$\Longleftarrow$ If for every $i$, $({\mathcal N}_i, {\mathcal X}_i)$ satisfies \eqref{eq:b.7} or \eqref{eq:b.1}, 
then for any $i,j\in \{1,\ldots, s\}$, the $2$-species network $G^{i,j}$ corresponding to ${\mathcal N}^{i,j}:=\begin{pmatrix} {\mathcal N}_i\\{\mathcal N}_j
\end{pmatrix}$ and ${\mathcal X}^{i,j}:=\begin{pmatrix} {\mathcal X}_i\\{\mathcal X}_j
\end{pmatrix}$ is either one-dimensional or a consistent subnetwork of \eqref{eq:2snet1} or  \eqref{eq:2snet2}. Then, by Theorem \ref{thm:2s} and Lemma \ref{lem2.1}, the network $G$ is degenerate. 

$\Longrightarrow$ Without loss of generality, assume that 
${\mathcal N}_1$ and ${\mathcal N}_2$ are linearly independent and $\mathcal{N}_i = a_i\mathcal{N}_1 + b_i\mathcal{N}_2$ for 
$i\in \{3, \ldots, s\}$. 
If $G$ is degenerate, then by 
Lemma \ref{lem2.2}, the $2$-species network $G^{1,2}$ determined  by ${\mathcal N}^{1,2}:=\begin{pmatrix} {\mathcal N}_1\\{\mathcal N}_2
\end{pmatrix}$ and ${\mathcal X}^{1,2}:=\begin{pmatrix} {\mathcal X}_1\\{\mathcal X}_2
\end{pmatrix}$ is  degenerate.  By Theorem \ref{thm:2s},  the network $G^{1,2}$ is a consistent subnetwork of \eqref{eq:2snet1} or  \eqref{eq:2snet2}. Notice that here, by \eqref{eq:2snet1} and  \eqref{eq:2snet2}, we have ${\mathcal X}_2={\mathcal X}_1$ or ${\mathcal X}_2=\bar{{\mathcal X}}_1$. 
 
If 
 ${\mathcal N}^{1,2}$ contains  column vectors from exactly three sets of  
    $B_1$, $B_2$, $B_3$ and $B_4$, 
then by Lemma \ref{lm:cl} {\it (I)},  $(|a_i|, |b_i|) \in \{(1, 0), (0, 1), (0, 0)\}$. Notice that $G$ contains no trivial species. So, for any $i\in \{3, \ldots, s\}$, we have 
${\mathcal N}_i\in \{\pm{\mathcal N}_1, \pm{\mathcal N}_2\}$. Definitely, one of 
the $2$-species networks $G^{1,i}$ and $G^{2,i}$ is two-dimensional. Without loss of generality, assume that $G^{1,i}$ is two-dimensional. In this case, ${\mathcal N}_i=\pm{\mathcal N}_2$. Notice that $G^{1,i}$ is degenerate by Lemma \ref{lem2.2}. So, by Theorem \ref{thm:2s},  the network $G^{1,i}$ is a consistent subnetwork of \eqref{eq:2snet1} or  \eqref{eq:2snet2}. Hence, 
${\mathcal X}_i$ can only be ${\mathcal X}_2$ or $\bar{\mathcal{X}}_2$. Finally, notice that the network \eqref{eq:mainnet1} stated in {\it (I)} is exactly the network \eqref{eq:2snet1}. Therefore, statement {\it (I)} holds. 

Suppose ${\mathcal N}^{1,2}$ contains column vectors from exactly two sets of  $B_1$, $B_2$, $B_3$ and $B_4$.  Notice that  ${\mathcal N}^{1,2}$ cannot be formed by the column vectors from $B_1$ and $B_2$ by Lemma \ref{lm:b1b2}. Below, we prove the statement {\it (II)} by discussing the following Cases 1--3. Also, notice that 
the only consistent subnetwork of \eqref{eq:mainnet2} is itself. So, ${\mathcal E}_i$ and ${\mathcal Y}$ stated in {\it (II)} satisfy  
\begin{align}
\label{eq:s6r4}
\left(
\begin{array}{c}
\mathcal{E}_1\\
\mathcal{E}_2 \\
\mathcal{E}_3 
\end{array}\right)=\left(
\begin{array}{cccc}
-1 & 1 & 0 & 0 \\
0 & 0 & -1 & 1 \\
-1 & 1 & -1 & 1 
\end{array}\right),~\text{and}~\mathcal{Y}=\left(
\begin{array}{cccc}
1 & 0 & 1 & 0 
\end{array}\right). 
\end{align}

\begin{itemize}
\item[Case 1.]If ${\mathcal N}^{1,2}$ contains column vectors from exactly  $B_3$ and $B_4$, then by Theorem \ref{thm:2s},  
the degenerate $2$-species network $G^{1,2}$ must be  
one of the following networks:
\begin{align}
\begin{pmatrix} {\mathcal E}_1\\{\mathcal E}_2
\end{pmatrix}=\left(
\begin{array}{cccccc}
-1 & 1 & 0 & 0\\
 0 & 0 &-1 & 1
\end{array}\right), ~\begin{pmatrix} {\mathcal Y}\\{\mathcal Y}
\end{pmatrix}=\left(
\begin{array}{cccccc}
1 & 0 & 1 & 0 \\
1 & 0 & 1 & 0 
\end{array}\right);\label{eq:b3b4net1}
\end{align}
\begin{align}
\begin{pmatrix} {\mathcal E}_1\\-{\mathcal E}_2
\end{pmatrix}=\left(
\begin{array}{cccccc}
 -1 & 1 & 0 & 0\\
 0 & 0 & 1 & -1
\end{array}\right), ~\begin{pmatrix} {\mathcal Y}\\{\bar{\mathcal Y}}
\end{pmatrix}=\left(
\begin{array}{cccccc}
 1 & 0 & 1 & 0\\
 0 & 1 & 0 & 1
\end{array}\right),\label{eq:b3b4net2}
\end{align}
where \eqref{eq:b3b4net1} and \eqref{eq:b3b4net2} are, respectively, the unique consistent subnetworks of \eqref{eq:2snet1} and \eqref{eq:2snet2} that have stoichiometric matrices formed by the columns from $B_3$ and $B_4$. By Lemma \ref{lm:cl}  {\it (II)}, 
$(|a_i|, |b_i|) \in \{(1, 0), (0, 1), (0, 0), (1,1)\}$. Observe that regardless of whether 
 $G^{1,2}$ is \eqref{eq:b3b4net1} or \eqref{eq:b3b4net2}, we have 
${\mathcal N}_i\in \{\pm {\mathcal E}_1, \pm {\mathcal E}_2, \pm {\mathcal E}_3, {\mathcal E}_1-{\mathcal E}_2,  {\mathcal E}_2-{\mathcal E}_1\}$ (from  \eqref{eq:s6r4}, ${\mathcal E}_3={\mathcal E}_1+{\mathcal E}_2$). Then, at least one of 
the $2$-species networks $G^{1,i}$ and $G^{2,i}$ is two-dimensional, and by  Lemma \ref{lem2.2}, the two-dimensional one must be degenerate.  So, 
if ${\mathcal N}_i\in \{\pm {\mathcal E}_1, \pm {\mathcal E}_2, \pm {\mathcal E}_3\}$, then by Theorem \ref{thm:2s}, we have \eqref{eq:b.1}. In the rest of the proof, we show that 
${\mathcal N}_i\not\in \{{\mathcal E}_1-{\mathcal E}_2,  {\mathcal E}_2-{\mathcal E}_1\}$. If ${\mathcal N}_i={\mathcal E}_1-{\mathcal E}_2$, then both $G^{1,i}$ and $G^{2,i}$  are two-dimensional and degenerate. 
If $G^{1,2}$ is \eqref{eq:b3b4net1}, then ${\mathcal N}_k={\mathcal E}_k$ and ${\mathcal X}_k={\mathcal Y}$ for $k=1,2$. 
Since ${\mathcal N}^{1,i}$ contains the column vectors from $B_1$, 
by 
Theorem \ref{thm:2s}, $G^{1,i}$ is a subnetwork of \eqref{eq:2snet1} and so 
${\mathcal X}_i={\mathcal X}_1={\mathcal Y}$. 
However, since ${\mathcal N}^{2,i}$ contains the column vectors from $B_2$, 
by 
Theorem \ref{thm:2s}, $G^{2,i}$ is a subnetwork of \eqref{eq:2snet2} and so 
${\mathcal X}_i=\bar{{\mathcal X}_2}=\bar{{\mathcal Y}}$, which is a contradiction.  
If $ G^{1,2}$ is \eqref{eq:b3b4net2}, then by
Theorem \ref{thm:2s}, $G^{1,i}$ is a subnetwork of \eqref{eq:2snet1} and so 
${\mathcal X}_i={\mathcal X}_1={\mathcal Y}$. 
However, since ${\mathcal N}^{2,i}$ also contains the column vectors from $B_1$, 
by 
Theorem \ref{thm:2s}, $G^{2,i}$ is a subnetwork of \eqref{eq:2snet1} and so 
${\mathcal X}_i={\mathcal X}_2=\bar{{\mathcal Y}}$, which is  again a contradiction. So, ${\mathcal N}_i\neq {\mathcal E}_1-{\mathcal E}_2$. 
Similarly, one can show that ${\mathcal N}_i\neq {\mathcal E}_2-{\mathcal E}_1$.

\item[Case 2.]If ${\mathcal N}^{1,2}$ contains column vectors from exactly  $B_1$ and $B_3$ or from exactly $B_1$ and $B_4$, then by Theorem \ref{thm:2s},  
the degenerate $2$-species network $G^{1,2}$ is
\begin{align}
\begin{pmatrix}
\mathcal E_3\\
\mathcal E_1
\end{pmatrix}
=
\begin{pmatrix}
-1&1&-1&1\\
-1&1&0&0
\end{pmatrix},~
\begin{pmatrix}
\mathcal Y\\
\mathcal Y
\end{pmatrix}
=
\begin{pmatrix}
1&0&1&0\\
1&0&1&0
\end{pmatrix},
\label{eq:b1b3b1b4}
\end{align}
where \eqref{eq:b1b3b1b4} is the unique consistent subnetwork of \eqref{eq:2snet1} that has the stoichiometric matrix formed by the columns from exactly  $B_1$ and $B_3$ or from exactly $B_1$ and $B_4$. By Lemma \ref{lm:cl}  {\it (II)}, 
$(|a_i|, |b_i|) \in \{(1, 0), (0, 1), (0, 0), (1,1)\}$. Notice that $G$ contains no trivial species. So, for any $i\in \{3, \ldots, s\}$, we have ${\mathcal N}_i\in \{\pm {\mathcal E}_1, \pm {\mathcal E}_2, \pm {\mathcal E}_3, {-\mathcal E}_1-{\mathcal E}_3,{\mathcal E}_1+{\mathcal E}_3  \}$ (from  \eqref{eq:s6r4}, ${\mathcal E}_2={\mathcal E}_3-{\mathcal E}_1$). Since all entries of ${\mathcal N}$ belong to $\{-1, 0, 1\}$, we have $\mathcal N_i\neq{-\mathcal E}_1-{\mathcal E}_3$  and $\mathcal N_i\neq{\mathcal E}_1+{\mathcal E}_3$. Therefore,
$\mathcal N_i
\in
\{\pm\mathcal E_1,\pm\mathcal E_2,\pm\mathcal E_3\}$, and by Theorem \ref{thm:2s}, we have \eqref{eq:b.1}.
\item[Case 3.]If ${\mathcal N}^{1,2}$ contains column vectors from exactly  $B_2$ and $B_3$ or from exactly $B_2$ and $B_4$, then by Theorem \ref{thm:2s},  
the degenerate $2$-species network $G^{1,2}$ is
\begin{align}
\begin{pmatrix}
\mathcal E_3\\
-\mathcal E_1
\end{pmatrix}
=
\begin{pmatrix}
-1&1&-1&1\\
1&-1&0&0
\end{pmatrix},~
\begin{pmatrix}
\mathcal Y\\
\mathcal Y
\end{pmatrix}
=
\begin{pmatrix}
1&0&1&0\\
0&1&0&1
\end{pmatrix},
\label{eq:b2b3b2b4}
\end{align}
where \eqref{eq:b2b3b2b4} is the unique consistent subnetwork of \eqref{eq:2snet2} that has the stoichiometric matrix formed by the columns from exactly  $B_2$ and $B_3$ or from exactly $B_2$ and $B_4$. By Lemma \ref{lm:cl}  {\it (II)}, 
$(|a_i|, |b_i|) \in \{(1, 0), (0, 1), (0, 0), (1,1)\}$. Notice that $G$ contains no trivial species. So, for any $i\in \{3, \ldots, s\}$, we have ${\mathcal N}_i\in \{\pm {\mathcal E}_1, \pm {\mathcal E}_2, \pm {\mathcal E}_3, {-\mathcal E}_1-{\mathcal E}_3,{\mathcal E}_1+{\mathcal E}_3  \}$ (from  \eqref{eq:s6r4}, ${\mathcal E}_2={\mathcal E}_3-{\mathcal E}_1$). Since all entries of ${\mathcal N}$ belong to $\{-1, 0, 1\}$, we have $\mathcal N_i\neq{-\mathcal E}_1-{\mathcal E}_3$  and $\mathcal N_i\neq{\mathcal E}_1+{\mathcal E}_3$. Therefore,
$\mathcal N_i
\in
\{\pm\mathcal E_1,\pm\mathcal E_2,\pm\mathcal E_3\}$, and by Theorem \ref{thm:2s}, we have \eqref{eq:b.1}.
\end{itemize}
\end{proof}

\section{Discussion}
\label{sec5}
In this paper, we obtained a complete characterization of degenerate two-dimensional zero-one networks without trivial species. Our main results provide equivalent network-theoretic and matrix-theoretic descriptions (Theorems~\ref{thm:main} and \ref{thm:deg-char}), showing that degeneracy in this class is entirely governed by two prototypical structures. As a consequence, every such degenerate network exhibits binomial steady-state equations, revealing an intrinsic connection between degeneracy and binomiality.

The proof framework developed here extends naturally to higher-dimensional zero-one networks. However, the explicit classification of degenerate cases in dimensions greater than two becomes substantially more involved, as the number of candidate row patterns grows rapidly with the dimension, leading to a significantly larger computational burden in the combinatorial analysis.



\bibliographystyle{siamplain}
\bibliography{references}
\end{document}